\documentclass{article}
\usepackage{amssymb}
\usepackage{amsmath}
\usepackage{amsthm}
\usepackage{color}
\usepackage[colorlinks]{hyperref}
\usepackage[a4paper]{geometry}
\usepackage{xurl}
\usepackage{rotating}
\usepackage{makecell}
\usepackage{array}
\usepackage{longtable}
\providecommand{\lcdGreedyTableRotationAngle}{-90}

\newtheorem{theorem}{Theorem}[section]

\newtheorem{corollary}[theorem]{Corollary}

\theoremstyle{definition}

\newtheorem{example}{Example}[section]

\theoremstyle{remark}

\title{New binary optimal LCD codes using heuristic embedding}
\author{Haeun Lim\thanks{haeunlim@sogang.ac.kr, Department of Mathematics and Institute for Mathematical and Data Sciences, Sogang University, Seoul, Korea}, Junmin An\thanks{junmin0518@sogang.ac.kr, Department of Mathematics and Institute for Mathematical and Data Sciences, Sogang University, Seoul, Korea}, Jon-Lark Kim\thanks{jlkim@sogang.ac.kr, Department of Mathematics and Institute for Mathematical and Data Sciences, Sogang University, Seoul, Korea}}
\date{}

\begin{document}
\maketitle

\begin{abstract}
    In this paper, we investigate the construction of binary optimal LCD codes through short LCD embeddings. For this purpose, we design heuristic frameworks based on a greedy algorithm. We explore the search spaces of LCD embeddings using the fact that an invertible matrix together with an arbitrary matrix yields an LCD embedding. We therefore use elementary row operations on the invertible block and single entry-flips on the arbitrary block as local moves in a greedy algorithm. Using this method, we have found $14$ optimal new LCD codes with dimensions 7 and 8 for lengths from 55 to 201.
\end{abstract}

\section{Introduction}
In 1992, J. L. Massey~\cite{M-1992} defined LCD codes as linear codes with complementary duals. He showed that the nearest-codeword decoding problem for LCD codes can be reduced to the problem of finding the nearest codeword in a code $\mathcal{C}$ given a codeword in its dual. Also, LCD codes have been the subject of wide interest since Carlet and Guilley~\cite{CG-2015} investigated their application against side-channel attacks and fault injection attack.

Since finding optimal linear codes is a central problem in coding theory, the same problem is also fundamental for LCD codes. There has been extensive research in determining the largest minimum distances of LCD codes. Galvez et al.~\cite{GKLRW-2018} determined the largest minimum distances of binary LCD codes for $n\le 12$. For binary LCD codes with $n\le 50$, partial results for various dimensions have been established in~\cite{AH-2020,B-2021,HS-2019,IS-2023,LSL-2024,WLL-2024}. A standard approach to determine the largest minimum distance of LCD codes is to explicitly construct codes obtaining the best known bounds. Various construction methods have been proposed in~\cite{B-2021,HS-2019,IS-2023,K-2023,LSL-2024,WLL-2024}, and these methods have produced new LCD codes that attain the best known upper bounds. In particular, Araya et al.~\cite{AHS-2021} provided a classification of binary optimal LCD codes of dimensions three and four. Also, Liu et al.~\cite{LLFS-2024} provided a partial classification of binary optimal LCD codes of dimension five, while Liu and Li~\cite{LL-2024} obtained partial results on binary LCD codes of dimension six.

More recently, embedding methods have been developed as a new approach to constructing optimal LCD codes. Kim et al.~\cite{KKL-2021} first proposed the shortest self-orthogonal embedding method, which constructs a given code into a self-orthogonal code by appending columns to its generator matrix. They provided an algorithm to embed a binary linear code of dimension three or four into a self-orthogonal code by appending the smallest number of columns. Kim and Choi~\cite{KC-2022} extended this result to codes of dimension five and six. The minimum number of columns required to embed a given binary code into a self-orthogonal code was completely determined by An et al.~\cite{AKKLW-arXiv1}. Later, An et al.~\cite{AHKL-2026} extended the shortest self-orthogonal embedding method to LCD codes. Although their shortest LCD embedding method produced several new LCD codes, they noted a limitation of their approach. Since the shortest LCD embeddings are parameterized by an arbitrary invertible matrix and an arbitrary matrix of prescribed sizes, the resulting search space is too large for exhaustive enumeration.

Recently, heuristic algorithms have been employed to construct new LCD codes in~\cite{WK-2026}. Motivated by this work, we apply a heuristic algorithm to the shortest LCD embedding problem to efficiently find, among the shortest LCD embeddings of a given code, one with the largest possible minimum distance. We use a greedy algorithm in which local moves are defined on the two blocks of a shortest LCD embedding, namely, the invertible-matrix block and the arbitrary-matrix block. Specifically, we use elementary row operations on the invertible-matrix block and randomly flipped a single entry of the arbitrary-matrix block. As a result, we obtained 19 optimal LCD codes with dimensions $6\le k\le 8$ for
lengths from $55$ to $225$, 14 of which are previously unknown. In addition, we obtained 159 LCD codes whose minimum distances are one below the bounds we have derived and 255 LCD codes whose minimum distances are two below those bounds.

This paper is organized as follows. Section 2 reviews basic notions from coding theory and introduces shortest LCD embeddings of binary linear codes, followed by a brief review of greedy algorithms. Section 3 describes our upper-bound construction and computational methodology. We present our computational results in Section 4. We conclude our paper in Section 5.

\section{Preliminaries}
\subsection{Linear codes}
Let $\mathbb{F}_2$ be the binary field with two elements. A \textit{code} $\mathcal{C}$ of length $n$ over $\mathbb{F}_2$ is a subset of $\mathbb{F}_2^n$, and elements of $\mathcal{C}$ are called \textit{codewords}. A \textit{linear code} is a $k$-dimensional subspace of $\mathbb{F}_2^n$ and denoted as an $[n,k]$ code. For an $[n,k]$ linear code $\mathcal{C}$, a \textit{generator matrix} $G$ for $\mathcal{C}$ is a $k\times n$ matrix over $\mathbb{F}_2$ whose rows form a basis of $\mathcal{C}$. Two codes $\mathcal{C}_1$ and $\mathcal{C}_2$ over $\mathbb{F}_2$ are called \textit{equivalent} if there exists a permutation of columns $\sigma$ where $\sigma \mathcal{C}_1 = \mathcal{C}_2$. Any linear code $\mathcal{C}$ over $\mathbb{F}_2$ has a generator matrix $G$ of the form
\[
G = [I_k~|~A]
\]
up to equivalence, where $I_k$ is the $k \times k$ identity matrix. Such a generator matrix is called the \textit{standard generator matrix} for $\mathcal{C}$. For an $[n,k]$ linear code $\mathcal{C}$, let $S$ be a subset of the set of coordinate positions of $\mathcal{C}$. The code obtained by removing the columns indexed by $S$ from $\mathcal{C}$ is called a \textit{punctured code} of $\mathcal{C}$.

The \textit{dual} $\mathcal{C}^{\perp}$ of a code $\mathcal{C}$ over $\mathbb{F}_2$ is defined as
\[
\mathcal{C}^\perp=\{\mathbf{x} \in \mathbb{F}_2^n~|~\mathbf{x} \cdot \mathbf{c} = 0  \textrm{ for all } \mathbf{c} \in \mathcal{C}\}
\]
where $\cdot$ is the standard dot product. The hull of a code $\mathcal{C}$ over $\mathbb{F}_2$ is defined as
\[
\operatorname{Hull}(\mathcal{C})=\mathcal{C}\cap \mathcal{C}^\perp.
\]
If $\operatorname{Hull}(\mathcal{C})=\{0\}$, then $\mathcal{C}$ is called an \textit{LCD code}. Let $G$ be a generator matrix of a linear code $\mathcal{C}$ over $\mathbb{F}_2$. It is known from~\cite{M-1992} that $\mathcal{C}$ is an LCD code if and only if $GG^T$ is invertible.

The \textit{Hamming weight} of a vector $\mathbf{x} \in \mathbb{F}_2^n$, denoted by $\operatorname{wt}(\mathbf{x})$, is the number of nonzero coordinates in $\mathbf{x}$. For two vectors $\mathbf{x}, \mathbf{y}$ in $\mathbb{F}_2^n$, the \textit{Hamming distance} between $\mathbf{x}$ and $\mathbf{y}$ is defined as
\[
d(\mathbf{x}, \mathbf{y})=\operatorname{wt}(\mathbf{x} - \mathbf{y}).
\]
The \textit{minimum (Hamming) distance} of a code $\mathcal{C}$, denoted by $d(\mathcal{C})$, is the minimum of the distances between any two distinct codewords in $\mathcal{C}$. An $[n,k]$ linear code $\mathcal{C}$ with minimum distance $d$ is denoted as an $[n,k,d]$ code.

Given $n$ and $k$, we denote by $d(n, k)$ the largest possible minimum distance over all linear $[n, k]$ codes. A linear $[n,k]$ code $\mathcal{C}$ with minimum distance $d(\mathcal{C})$ is called \textit{distance-optimal} if
\[
d(\mathcal{C})=d(n,k).
\]
Likewise, $d_{LCD}(n,k)$ denotes the largest possible minimum distance over all LCD $[n,k]$ codes. An LCD $[n,k]$ code $\mathcal{C}$ is called an \textit{optimal LCD code} if
\[
d(\mathcal{C})=d_{LCD}(n,k).
\]

\subsection{Shortest LCD embeddings of linear codes}
Given an $[n,k]$ linear code $\mathcal{C}$ over $\mathbb{F}_2$, an \textit{LCD embedding} $\tilde{\mathcal{C}}$ of $\mathcal{C}$ is an $[n',k]$ LCD code over $\mathbb{F}_2$ such that $\mathcal{C}$ is a punctured code of $\tilde{\mathcal{C}}$. Let $m=n'-n$ and $\mathcal{I}=\{i_1, \ldots , i_m\}$ be coordinate positions of $\tilde{\mathcal{C}}$ such that puncturing $\tilde{\mathcal{C}}$ at the coordinates in $\mathcal{I}$ yields $\mathcal{C}$. Up to equivalence, we may assume that the positions in $\mathcal{I}$ are the last $m$ coordinate positions.

Let $G$ be a generator matrix of $\mathcal{C}$. Consider the code $\mathcal{C}'$ generated by
\[
G'=[G~|~G~|~I].
\]
Since
\[
G'(G')^T=[G~|~G~|~I][G~|~G~|~I]^T=GG^T+GG^T+I=I,
\]
the code $\mathcal{C}'$ is LCD. Thus, any linear code over $\mathbb{F}_2$ has an LCD embedding.

Among all LCD embeddings of $\mathcal{C}$, one having the minimum length is called a \textit{shortest LCD embedding} of $\mathcal{C}$. The following two theorems determine the length and structure of shortest LCD embeddings.

\begin{theorem}[\cite{AHKL-2026}]
    Let $\mathcal{C}$ be an $[n,k]$ code over $\mathbb{F}_2$ with $\ell=\dim\operatorname{Hull}(\mathcal{C})$. Then the length $n'$ of a shortest LCD embedding of $\mathcal{C}$ is $n+\ell$.
\end{theorem}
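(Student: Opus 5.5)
We have to prove two things: any LCD embedding of $\mathcal{C}$ has length at least $n+\ell$, and some LCD embedding has length exactly $n+\ell$. Both reduce to the Gram matrix $GG^T$ of a generator matrix $G$ of $\mathcal{C}$. The bridge is the standard fact that $\operatorname{rank}(GG^T)=k-\ell$. To see it, note that a vector $\mathbf{x}G$ lies in $\operatorname{Hull}(\mathcal{C})$ exactly when $\mathbf{x}GG^T=0$. Since the map $\mathbf{x}\mapsto \mathbf{x}G$ is injective, the left kernel of $GG^T$ is isomorphic to $\operatorname{Hull}(\mathcal{C})$. Hence it has dimension $\ell$.

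\textbf{Lower bound.} Let $\tilde{\mathcal{C}}$ be an LCD embedding of length $n+m$, with the punctured coordinates placed last. We may choose its generator matrix as $\tilde G=[G~|~B]$, where $B$ is $k\times m$. We can take the first block to be exactly $G$: puncturing is a surjective linear map $\tilde{\mathcal{C}}\to\mathcal{C}$ between spaces of equal dimension $k$, so it is a bijection, and we lift the rows of $G$. Then
\[
\tilde G\tilde G^T=GG^T+BB^T .
\]
Since $\tilde{\mathcal{C}}$ is LCD, this matrix must be invertible, i.e.\ have rank $k$. By subadditivity of rank,
\[
k\le \operatorname{rank}(GG^T)+\operatorname{rank}(BB^T)\le (k-\ell)+m .
\]
Hence $m\ge \ell$.

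\textbf{Construction.} We need a $k\times\ell$ matrix $B$ with $GG^T+BB^T$ invertible. Change basis so that the first $\ell$ rows of $G$ span the hull. Then $G$ is replaced by $PG$ for an invertible $P$, and
\[
PG(PG)^T=\begin{pmatrix}0&0\\0&M\end{pmatrix},
\]
where $M$ is the Gram matrix of the remaining $k-\ell$ rows. The hull rows are orthogonal to every codeword, so the off-diagonal blocks vanish and the top-left block is $0$. The matrix $M$ is invertible because the total rank is $k-\ell$.

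Now take $B=\begin{pmatrix}I_\ell\\0\end{pmatrix}$ relative to this basis. Then $BB^T=\operatorname{diag}(I_\ell,0)$, and the sum is $\operatorname{diag}(I_\ell,M)$, which is invertible. The code generated by $[PG~|~B]$ therefore has $k\times k$ Gram matrix $\operatorname{diag}(I_\ell,M)$. This matrix is invertible, so the rows of $[PG~|~B]$ are independent, the code has dimension $k$, and it is LCD. Puncturing the last $\ell$ coordinates gives the code generated by $PG$, which is $\mathcal{C}$. So an LCD embedding of length $n+\ell$ exists.

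\textbf{Main obstacle.} The hardest step is the block-diagonal normal form of the symmetric matrix $GG^T$ over $\mathbb{F}_2$. Symmetric forms in characteristic $2$ need not be diagonalizable, for instance alternating ones. However, we do not diagonalize $M$. We only split off the radical, which works over any field: choose a basis of the hull, extend it to a basis of $\mathcal{C}$, and observe that the complement's Gram matrix $M$ is nondegenerate.
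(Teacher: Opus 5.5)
Your proof is correct and complete. The paper gives no proof of this statement; it only quotes it from \cite{AHKL-2026}. The closest point of comparison is therefore Theorem~\ref{shortest-lcd-embedding}. Your existence construction, appending $\begin{bmatrix} I_\ell\\ 0\end{bmatrix}$ to a generator matrix whose first $\ell$ rows span the hull, is exactly the matrix $\begin{bmatrix} H & D\\ A & B\end{bmatrix}$ of that theorem with $D=I_\ell$ and $B=0$. Your splitting of the Gram matrix as $\operatorname{diag}(0,M)$ with $M$ nonsingular is the fact that makes that theorem work.

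For the lower bound, your argument $k\le\operatorname{rank}(GG^T)+\operatorname{rank}(BB^T)\le(k-\ell)+m$, via $\operatorname{rank}(GG^T)=k-\ell$, differs from a more explicit but equivalent route. In that route, write the appended columns on the hull rows as an $\ell\times m$ block $D$. If $m<\ell$, some nonzero $x$ satisfies $xD=0$. Then $x[H\,|\,D]=[xH\,|\,0]$ is a nonzero codeword of the embedding that is orthogonal to every row, so it lies in the hull of the embedding, a contradiction.

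The trade-off between the two routes is this. Your rank count is shorter and needs no choice of hull basis. The explicit version exhibits the offending hull vector. It also shows that when $m=\ell$ the block $D$ must be invertible, which is the converse half of Theorem~\ref{shortest-lcd-embedding}.

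One small point: you should state $\operatorname{rank}(BB^T)\le\operatorname{rank}(B)\le m$ explicitly, since that is where $m$ enters your inequality.
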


\begin{theorem}[\cite{AHKL-2026}]\label{shortest-lcd-embedding}
    Let $\mathcal{C}$ be an $[n,k]$ code over $\mathbb{F}_2$ with $\ell=\dim\operatorname{Hull}(\mathcal{C})$. Let
    \[
    G = \begin{bmatrix}
        H \\ A
    \end{bmatrix}
    \]
    be a generator matrix of $\mathcal{C}$ where $H$ is a generator matrix of $\operatorname{Hull}(\mathcal{C})$. For an invertible $\ell\times\ell$ matrix $D$ and a $(k-\ell)\times\ell$ matrix $B$, let \[
    G' = \begin{bmatrix}
        H & D \\ A & B
    \end{bmatrix}.
    \]
    Then, $G'$ generates a shortest LCD embedding of $\mathcal{C}$. Conversely, every shortest LCD embedding of $\mathcal{C}$ has a generator matrix of this form.
\end{theorem}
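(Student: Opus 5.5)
The plan is to compute $G'(G')^T$ in block form and show it is invertible. Then the converse is handled by a change of basis together with the previous theorem.

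\emph{Forward direction.} Since $H$ generates $\operatorname{Hull}(\mathcal{C})$, its rows are orthogonal to every codeword of $\mathcal{C}$. Hence $HH^T=0$ and $HA^T=0$, and so $AH^T=0$. Put $M=AA^T$. Then
\[
G'(G')^T=\begin{bmatrix} HH^T+DD^T & HA^T+DB^T\\ AH^T+BD^T & AA^T+BB^T\end{bmatrix}
=\begin{bmatrix} DD^T & DB^T\\ BD^T & M+BB^T\end{bmatrix}.
\]
Subtracting $BD^{-1}$ times the first block row from the second block row (valid since $D$ is invertible) turns the lower-right block into $M+BB^T-BD^{-1}DB^T=M$. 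So
\[
\det G'(G')^T=\det(DD^T)\det M .
\]
It remains to show $M$ is invertible. Suppose $xM=0$ for some $x\in\mathbb{F}_2^{k-\ell}$, and let $c=xA\in\mathcal{C}$. Then $c$ is orthogonal to the rows of $A$, and also to the rows of $H$ because $HA^T=0$. Thus $c\in\operatorname{Hull}(\mathcal{C})$, the row space of $H$. The rows of $G$ form a basis, so this forces $x=0$. Hence $M$ is invertible and $G'(G')^T$ is invertible.

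Next I would check that $G'$ has rank $k$ (its projection onto the first $n$ coordinates is $G$) and that puncturing the last $\ell$ coordinates gives $\mathcal{C}$. So $G'$ generates an $[n+\ell,k]$ LCD code that is an LCD embedding of $\mathcal{C}$. By the previous theorem it is shortest.

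\emph{Converse.} Let $\tilde{\mathcal{C}}$ be a shortest LCD embedding. By the previous theorem it has length $n+\ell$, so we may place the $\ell$ extra coordinates last. Puncturing is injective on $\tilde{\mathcal{C}}$, since both codes have dimension $k$. Hence each row of $G$ lifts uniquely, and $\tilde{\mathcal{C}}$ has a generator matrix $\begin{bmatrix} H & D\\ A & B\end{bmatrix}$ for some matrices $D,B$.

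The main step is to show that $D$ is invertible. Suppose $yD=0$. Then the codeword $yH$ extended by zeros, namely $(yH\mid 0)$, lies in $\tilde{\mathcal{C}}$. It is orthogonal to every row of $G'$, because $yH$ is in the hull of $\mathcal{C}$ and its tail is zero. So it lies in $\operatorname{Hull}(\tilde{\mathcal{C}})=\{0\}$. This gives $yH=0$, hence $y=0$. Therefore $D$ is invertible, which finishes the proof.
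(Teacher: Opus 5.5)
Your proof is correct and complete. The paper does not prove this statement; it quotes it from \cite{AHKL-2026}, so there is no in-paper argument to compare against. Your route is the natural one. In the forward direction you reduce $G'(G')^T$ by block elimination to $\det(DD^T)\det(AA^T)$, and $AA^T$ is invertible because a left null vector would produce a hull vector outside the row space of $H$. In the converse you show that a left null vector $y$ of $D$ makes $(yH\mid 0)$ a nonzero element of $\operatorname{Hull}(\tilde{\mathcal{C}})$.

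One small improvement: you use the preceding length theorem twice, once for minimality in the forward direction and once to fix the length in the converse. Your converse argument already gives the needed lower bound without it. Let $\tilde{\mathcal{C}}$ be any LCD embedding of length $n+m$. Your lifting step, which needs only $\dim\tilde{\mathcal{C}}=k$, gives a generator matrix of the same shape as $G'$ with $D$ of size $\ell\times m$ and $B$ of size $(k-\ell)\times m$. If $m<\ell$, there is a nonzero $y$ with $yD=0$. Then $(yH\mid 0)$ is a nonzero vector of $\tilde{\mathcal{C}}\cap\tilde{\mathcal{C}}^\perp$, a contradiction, so $m\ge\ell$. Hence your construction is shortest and every shortest LCD embedding has length exactly $n+\ell$, which makes the argument self-contained.
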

By Theorem~\ref{shortest-lcd-embedding}, every shortest LCD embedding of $\mathcal{C}$ can be obtained, up to equivalence, by choosing an invertible matrix $D$ and an arbitrary matrix $B$.

\subsection{Greedy algorithm}

A greedy algorithm is a single-solution-based search method. Starting from an initial solution, the algorithm evaluates neighboring solutions using a \textit{fitness function} and moves to a locally best neighbor whenever a neighbor has a better fitness value. Here, the fitness value measures how well a solution satisfies the objective of the problem. This process is repeated until no neighboring solution gives a better fitness value, at which point the algorithm reaches a local optimum.

The main advantage of a greedy algorithm is its simplicity and efficiency. Since the algorithm only accepts moves that improve the fitness value, the search proceeds monotonically toward better solutions with respect to the chosen fitness function. However, this local decision rule also gives rise to a limitation. Because the algorithm does not consider the long-term effects of current choices, it may become trapped in a local optimum and fail to find a globally optimal solution.

One common way to reduce this limitation is to use a multi-start strategy. In this approach, the greedy search is repeated several times from different initial solutions, and the best solution found among all runs is selected. This allows the algorithm to explore different regions of the solution space and reduces the dependence on a single initial solution.

\section{Methods}
In this section, we present our methodology: we derive upper bounds on the minimum distances of binary LCD codes, and describe the experimental setup.

\subsection{Bounds}\label{experiments-bound-LCD}

Araya et al.~\cite{AHS-2021} determined the exact values of $d_{LCD}(n,4)$ for every $n$. Moreover, Liu et al.~\cite{LLFS-2024} determined the exact values of $d_{LCD}(n,5)$ for every $n$. Therefore, we focus on the cases where $6 \leq k\leq 10$.

We derive upper bounds for LCD codes over $\mathbb{F}_2$ of dimensions $6 \leq k \leq 10$. First, Grassl's table~\cite{grassl} provides bounds for $d(n,k)$ and in turn gives an upper bound for $d_{LCD}(n,k)$. For given $n$ and $k$, the following results allow us to derive upper bounds on $d_{LCD}(n,k)$ from known values of $d_{LCD}(n',k')$ where $n \geq n'$ and $k \geq k'$.
\begin{theorem}[\cite{CMTQ-2019}]\label{CMTQ-2019}
    If $2 \leq k \leq n$, then
    \[
    d_{LCD}(n,k) \leq d_{LCD}(n, k-1).
    \]
\end{theorem}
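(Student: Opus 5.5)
The plan is to start from an optimal LCD $[n,k,d]$ code $\mathcal{C}$ with $d=d_{LCD}(n,k)$ and produce an LCD $[n,k-1]$ code of minimum distance at least $d$. The cheapest source is a $(k-1)$-dimensional subcode of $\mathcal{C}$, since every nonzero codeword of a subcode has weight at least $d$. So the question is purely about the bilinear form $\langle \mathbf{x},\mathbf{y}\rangle=\mathbf{x}\cdot\mathbf{y}$ restricted to $\mathcal{C}$. That $\mathcal{C}$ is LCD means exactly that this form is nondegenerate on $\mathcal{C}$, i.e.\ $GG^T$ is invertible.

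\textbf{Step 1 (the generic case).} Suppose $\mathcal{C}$ contains a codeword $\mathbf{v}$ of odd weight, so $\mathbf{v}\cdot\mathbf{v}=1$. Put $W=\{\mathbf{c}\in\mathcal{C} : \mathbf{c}\cdot\mathbf{v}=0\}$. Then $\mathbf{v}\notin W$, so $\mathcal{C}=W\oplus\langle\mathbf{v}\rangle$ is an orthogonal direct sum and $\dim W=k-1$. If $\mathbf{w}\in W$ were orthogonal to all of $W$, it would also be orthogonal to $\mathbf{v}$, hence to all of $\mathcal{C}$. Nondegeneracy then forces $\mathbf{w}=0$. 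Hence $W$ is an LCD $[n,k-1]$ code with $d(W)\ge d$, giving $d_{LCD}(n,k-1)\ge d_{LCD}(n,k)$.

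\textbf{Step 2 (the even case, the main obstacle).} If every codeword of $\mathcal{C}$ has even weight, the form on $\mathcal{C}$ is alternating and nondegenerate, so $k$ is even. Every subcode is then even, and an LCD subcode must have even dimension. The example $\langle 110,011\rangle$ shows that no $(k-1)$-dimensional subcode can work, so we must leave $\mathcal{C}$.

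My plan is to pick a nonzero $\mathbf{h}\in\mathcal{C}$ and set $W=\mathbf{h}^\perp\cap\mathcal{C}$. Then $W$ has dimension $k-1$ and its hull is exactly $\langle\mathbf{h}\rangle$, because the radical of an alternating form on an odd-dimensional subspace is one-dimensional. Next, write $W=W_0\oplus\langle\mathbf{h}\rangle$ with $W_0$ LCD of dimension $k-2$. Replace $\mathbf{h}$ by $\mathbf{h}+\mathbf{e}_i$ for a suitable coordinate $i$, and set $W'=W_0\oplus\langle\mathbf{h}+\mathbf{e}_i\rangle$.

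The LCD property of $W'$ should follow from a Gram-matrix computation, mirroring Theorem~\ref{shortest-lcd-embedding}. Write $\mathbf{r}$ for the column vector of values $\mathbf{w}\cdot\mathbf{e}_i$ as $\mathbf{w}$ runs over a basis of $W_0$. Since $\mathbf{h}\perp W_0$ and $\mathbf{h}\cdot\mathbf{h}=0$, the Gram matrix of $W'$ becomes $\begin{bmatrix} M & \mathbf{r}\\ \mathbf{r}^T & \mathbf{h}\cdot\mathbf{h}+h_i+1\end{bmatrix}$ with $M$ invertible. Choosing $i$ with $h_i=0$ makes the corner entry $1$. Invertibility then reduces to $1+\mathbf{r}^TM^{-1}\mathbf{r}=1$, which I expect to arrange by the choice of $i$ or by adjusting $W_0$.

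The real difficulty is the minimum distance. Weights of codewords $\mathbf{w}+\mathbf{h}+\mathbf{e}_i$ change by $\pm1$ relative to $\mathbf{w}+\mathbf{h}$. Because $\mathcal{C}$ is even, $d$ is even, so a drop to $d-1$ happens only if some weight-$d$ word of $\mathbf{h}+W_0$ has a $1$ at position $i$. I would first try to choose $\mathbf{h}$ and $i$ so that position $i$ lies outside the supports of all minimum-weight words of the coset $\mathbf{h}+W_0$, for example by taking $\mathbf{h}$ of maximum weight. If that fails, I would fall back on the known argument from~\cite{CMTQ-2019} for this even case.
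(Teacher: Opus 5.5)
Your Step~1 (a codeword of odd weight exists) is correct and complete. The gap is in Step~2, at the point you flag yourself. Nothing in your plan guarantees a coordinate $i$ avoiding the supports of all weight-$d$ words of $\mathbf{h}+W_0$. If $\mathbf{h}$ is chosen first, even one of maximum weight, there is no reason such an $i$ exists. Falling back on ``the known argument'' appeals to the statement itself, and the paper only quotes this result, so your argument has to stand on its own.

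The missing idea is to reverse the order of choices. Fix a coordinate $i$ on which $\mathcal{C}$ is not identically zero, and take $W=\{\mathbf{c}\in\mathcal{C} : c_i=0\}$, the shortened subcode. Equivalently, let $\mathbf{h}$ be the unique codeword with $\mathbf{h}\cdot\mathbf{c}=c_i$ for all $\mathbf{c}\in\mathcal{C}$. It exists by nondegeneracy; it is the $\mathcal{C}$-component of $\mathbf{e}_i$ in $\mathbb{F}_2^n=\mathcal{C}\oplus\mathcal{C}^\perp$. Then $W=\mathbf{h}^\perp\cap\mathcal{C}$ and $h_i=\mathbf{h}\cdot\mathbf{h}=0$. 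Every word of $\mathbf{h}+W_0$ is a nonzero word of $W$, so it has weight at least $d$ and a $0$ in position $i$. Adding $\mathbf{e}_i$ therefore raises each such weight by exactly one, to at least $d+1$, while words of $W_0$ keep weight at least $d$. Also $\mathbf{h}+\mathbf{e}_i\notin W_0$ because of its $1$ in position $i$, so $\dim W'=k-1$.

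The LCD property, which you left to be ``arranged'', is in fact automatic for every $i$. Every word of $W'$ outside $W_0$ has odd weight, so it is not self-orthogonal, hence $\operatorname{Hull}(W')\subseteq W_0$. A vector of $W_0$ orthogonal to $W_0$ is also orthogonal to $\mathbf{h}\in\operatorname{Hull}(W)$. It therefore lies in $\operatorname{Hull}(W)\cap W_0=\langle\mathbf{h}\rangle\cap W_0=\{0\}$. In your Gram-matrix language, $\mathbf{r}^TM^{-1}\mathbf{r}=0$ always, because the inverse of an invertible alternating matrix is again alternating.

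Two smaller corrections:
\begin{itemize}
\item The corner entry is $(\mathbf{h}+\mathbf{e}_i)\cdot(\mathbf{h}+\mathbf{e}_i)=\mathbf{h}\cdot\mathbf{h}+1=1$ regardless of $h_i$, since the cross term $2h_i$ vanishes. So the vanishing at $i$ is needed for the distance, not for the Gram matrix.
\item $\operatorname{Hull}(W)$ is one-dimensional because $W^\perp\cap\mathcal{C}=\langle\mathbf{h}\rangle$. An alternating form on an odd-dimensional space is only guaranteed to have a nonzero radical, not a one-dimensional one.
\end{itemize}
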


\begin{theorem}[\cite{LSL-2024}]\label{LCD-finite}
    If $2 \leq k \leq n$, then \[
    d_{LCD}(n,k) \leq \max \{d_{LCD}(n-1, k-1), d_{LCD}(n-2, k-2)\}.
    \]
\end{theorem}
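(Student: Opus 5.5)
The plan is to start from an optimal LCD code and shorten it at one or two coordinates, using the fact that a binary code is LCD exactly when the bilinear form $(\mathbf x,\mathbf y)\mapsto \mathbf x\cdot\mathbf y$ restricted to it is nondegenerate. Shortening never decreases the minimum distance. So we only need to show that some shortened subcode remains LCD, with length and dimension both reduced by $1$ or both reduced by $2$.

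Let $\mathcal C$ be an LCD $[n,k,d]$ code with $d=d_{LCD}(n,k)$, and fix a coordinate $i$ on which $\mathcal C$ is not identically zero; such an $i$ exists since $k\ge 1$. Put $\mathcal C_i=\{\mathbf c\in\mathcal C : c_i=0\}$. This is a subspace of dimension $k-1$, and every nonzero word in it has weight at least $d$.

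\textbf{Case 1: $\mathcal C_i$ is LCD.} Deleting coordinate $i$ from $\mathcal C_i$ changes neither weights nor inner products, because all its words vanish there. The result is an LCD $[n-1,k-1,\ge d]$ code. Hence $d\le d_{LCD}(n-1,k-1)$.

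\textbf{Case 2: $\mathcal C_i$ is not LCD.} Let $R=\operatorname{Hull}(\mathcal C_i)$ be the radical of the form on $\mathcal C_i$. Since $\mathcal C_i$ has codimension $1$ in the nondegenerate space $\mathcal C$, the standard rank argument gives $\dim R\le 1$: the Gram matrix of $\mathcal C_i$ is obtained from the invertible Gram matrix of $\mathcal C$ by deleting one row and one column, so its rank is at least $k-2$. Therefore $R=\langle \mathbf h\rangle$ for some nonzero $\mathbf h\in\mathcal C_i$.

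Choose a coordinate $j$ with $h_j=1$; note $j\neq i$. Let $W=\{\mathbf c\in\mathcal C_i : c_j=0\}$. Then $W$ has dimension $k-2$ and $\mathbf h\notin W$, so $\mathcal C_i=\langle\mathbf h\rangle\oplus W$. Any complement of the radical is nondegenerate: if $\mathbf w\in W$ were orthogonal to all of $W$, it would also be orthogonal to $\mathbf h$, hence to all of $\mathcal C_i$, so $\mathbf w\in R\cap W=\{0\}$. Thus $W$ is LCD, and it vanishes on both coordinates $i$ and $j$. Deleting these two coordinates gives an LCD $[n-2,k-2,\ge d]$ code, so $d\le d_{LCD}(n-2,k-2)$.

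Combining the two cases yields the stated bound. The only delicate point is the bound $\dim R\le 1$ in Case 2, together with choosing the second coordinate $j$ so that the resulting subcode avoids the radical. In the degenerate case $k=2$, one interprets $d_{LCD}(n-2,0)$ via the usual convention that the zero code has infinite distance.
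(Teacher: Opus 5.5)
Your argument is correct and complete. The paper gives no proof of its own here. The inequality is quoted from \cite{LSL-2024} and used only inside the proof of Corollary~\ref{recursive-upper-bound}, so there is no in-paper argument to compare step by step.

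Your route is self-contained. You identify LCD with nondegeneracy of the dot product on the code and shorten once at a coordinate where $\mathcal C$ is not identically zero. If $\mathcal C_i$ is degenerate, its radical is at most one-dimensional, because a $(k-1)\times(k-1)$ principal submatrix of an invertible Gram matrix has rank at least $k-2$. You then shorten again at a coordinate in the support of the radical generator $\mathbf h$, so the surviving subcode $W$ is a complement of the radical. The details check out: $j\neq i$ since $h_i=0$; $\dim W=k-2$ since the $j$-th coordinate functional is nonzero on $\mathcal C_i$; and $W\cap W^\perp=\{0\}$ since $\mathbf h$ is orthogonal to all of $\mathcal C_i$.

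Because only nondegeneracy of the form is used, the same proof gives the inequality for Euclidean LCD codes over any finite field.

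Your closing remark on $k=2$ is genuinely needed, not cosmetic. The even-weight $[3,2,2]$ code is LCD: its Gram matrix for the basis $110,011$ is $\left(\begin{smallmatrix}0&1\\1&0\end{smallmatrix}\right)$. Each of its one-coordinate shortenings is spanned by an even-weight word, so none is LCD, and $d_{LCD}(2,1)=1$. Hence at $k=2$ the statement holds only under a convention giving the zero code a sufficiently large (e.g.\ infinite) minimum distance. This edge case does not affect the paper, which applies the theorem only for $6\le k\le 10$.
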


Combining the bounds from Grassl's table~\cite{grassl} with Theorem~\ref{CMTQ-2019} and Theorem~\ref{LCD-finite}, we construct upper bounds on $d_{LCD}(n,k)$ for LCD $[n,k]$ codes over $\mathbb{F}_2$ in the range $51 \leq n \leq 256$ and $6 \leq k \leq 10$. The derived bounds are provided below.

\begin{corollary}\label{recursive-upper-bound}
Let $d_{g}(n,k)$ denote the upper bound for binary $[n,k]$ linear codes in Grassl's table~\cite{grassl}. Define $U(n,4)=d_{LCD}(n,4)$ and $U(n,5)=d_{LCD}(n,5)$ with the exact values
determined in~\cite{AHS-2021,LLFS-2024}, respectively, and, for $k\geq6$,
define recursively
\[
U(n,k)=\min\Bigl\{d_{\mathrm{Grassl}}(n,k),\ U(n,k-1),\
\max\{U(n-1,k-1),U(n-2,k-2)\}\Bigr\}.
\]
Then, for $51\le n\le256$ and $6\le k\le10$,
\[
d_{\mathrm{LCD}}(n,k)\le U(n,k),
\]
where the values of $U(n,k)$ are given in Table~\ref{bound_table}.

\begin{proof}
The assertion follows by induction on $k$.  The initial values are exact,
and $d_{\mathrm{LCD}}(n,k)\le d_{g}(n,k)$ because every LCD
code is a linear code. Theorem~\ref{CMTQ-2019} gives the second candidate
in the minimum as
\[
d_{LCD}(n, k)\le d_{LCD}(n, k-1)\le U(n, k-1)
\]
for $6\le k\le 10$ and Theorem~\ref{LCD-finite} gives the third as
\[
\begin{aligned}
d_{LCD}(n,k)
&\leq \max\{d_{LCD}(n-1,k-1),
             d_{LCD}(n-2,k-2)\}\\
&\leq \max\{U(n-1,k-1),U(n-2,k-2)\}.
\end{aligned}
\]
Thus each of the three entries in the defining minimum is an upper bound
on $d_{LCD}(n,k)$, so their minimum is also an upper bound.
\end{proof}
\begingroup
\setlength{\tabcolsep}{1pt}
\setlength{\LTleft}{\fill}
\setlength{\LTright}{\fill}
\begin{longtable}{>{\footnotesize\upshape\centering\arraybackslash}p{2.4em}|*{5}{>{\footnotesize\upshape\centering\arraybackslash}p{1.55em}}||>{\footnotesize\upshape\centering\arraybackslash}p{2.4em}|*{5}{>{\footnotesize\upshape\centering\arraybackslash}p{1.55em}}||>{\footnotesize\upshape\centering\arraybackslash}p{2.4em}|*{5}{>{\footnotesize\upshape\centering\arraybackslash}p{1.55em}}}
\caption{Upper bound on the minimum distance of LCD codes of lengths $51\le n\le 256$ and dimensions $6\le k\le 10$}
\label{bound_table}\\
\hline
$\mathrm{n}\backslash\mathrm{k}$ & $6$ & $7$ & $8$ & $9$ & $10$ & $\mathrm{n}\backslash\mathrm{k}$ & $6$ & $7$ & $8$ & $9$ & $10$ & $\mathrm{n}\backslash\mathrm{k}$ & $6$ & $7$ & $8$ & $9$ & $10$ \\
\hline
\endfirsthead
\hline
$\mathrm{n}\backslash\mathrm{k}$ & $6$ & $7$ & $8$ & $9$ & $10$ & $\mathrm{n}\backslash\mathrm{k}$ & $6$ & $7$ & $8$ & $9$ & $10$ & $\mathrm{n}\backslash\mathrm{k}$ & $6$ & $7$ & $8$ & $9$ & $10$ \\
\hline
\endhead
\hline
\endfoot
\hline
\endlastfoot
51 & 24 & 24 & 23 & 22 & 21 & 120 & 60 & 60 & 58 & 56 & 56 & 189 & 96 & 94 & 94 & 92 & 91 \\
52 & 25 & 24 & 24 & 22 & 22 & 121 & 60 & 60 & 58 & 57 & 56 & 190 & 96 & 95 & 94 & 92 & 92 \\
53 & 26 & 24 & 24 & 23 & 22 & 122 & 61 & 60 & 58 & 58 & 57 & 191 & 96 & 96 & 95 & 92 & 92 \\
54 & 26 & 24 & 24 & 24 & 23 & 123 & 62 & 61 & 59 & 58 & 58 & 192 & 96 & 96 & 96 & 93 & 92 \\
55 & 27 & 25 & 24 & 24 & 24 & 124 & 62 & 62 & 60 & 59 & 58 & 193 & 96 & 96 & 96 & 94 & 92 \\
56 & 27 & 26 & 24 & 24 & 24 & 125 & 63 & 62 & 61 & 60 & 58 & 194 & 96 & 96 & 96 & 94 & 93 \\
57 & 28 & 26 & 25 & 24 & 24 & 126 & 63 & 63 & 62 & 60 & 59 & 195 & 97 & 96 & 96 & 95 & 94 \\
58 & 28 & 27 & 26 & 25 & 24 & 127 & 64 & 63 & 63 & 60 & 60 & 196 & 98 & 96 & 96 & 96 & 94 \\
59 & 29 & 28 & 26 & 26 & 25 & 128 & 64 & 64 & 63 & 61 & 60 & 197 & 98 & 97 & 96 & 96 & 95 \\
60 & 29 & 28 & 27 & 26 & 26 & 129 & 64 & 64 & 64 & 62 & 60 & 198 & 99 & 98 & 97 & 96 & 96 \\
61 & 30 & 29 & 28 & 27 & 26 & 130 & 64 & 64 & 64 & 62 & 61 & 199 & 100 & 98 & 98 & 96 & 96 \\
62 & 30 & 30 & 28 & 28 & 27 & 131 & 64 & 64 & 64 & 62 & 62 & 200 & 100 & 99 & 98 & 96 & 96 \\
63 & 31 & 30 & 28 & 28 & 28 & 132 & 65 & 64 & 64 & 63 & 62 & 201 & 100 & 100 & 98 & 97 & 96 \\
64 & 31 & 31 & 29 & 28 & 28 & 133 & 66 & 64 & 64 & 64 & 63 & 202 & 101 & 100 & 99 & 98 & 96 \\
65 & 32 & 31 & 30 & 29 & 28 & 134 & 66 & 65 & 64 & 64 & 64 & 203 & 102 & 100 & 100 & 98 & 97 \\
66 & 32 & 32 & 30 & 30 & 28 & 135 & 67 & 66 & 65 & 64 & 64 & 204 & 102 & 101 & 100 & 99 & 98 \\
67 & 32 & 32 & 31 & 30 & 29 & 136 & 68 & 66 & 66 & 64 & 64 & 205 & 103 & 102 & 100 & 100 & 98 \\
68 & 32 & 32 & 32 & 30 & 30 & 137 & 68 & 67 & 66 & 64 & 64 & 206 & 104 & 102 & 101 & 100 & 99 \\
69 & 33 & 32 & 32 & 31 & 30 & 138 & 68 & 68 & 66 & 65 & 64 & 207 & 104 & 103 & 102 & 100 & 100 \\
70 & 34 & 33 & 32 & 32 & 31 & 139 & 69 & 68 & 67 & 66 & 65 & 208 & 104 & 104 & 102 & 101 & 100 \\
71 & 34 & 34 & 32 & 32 & 32 & 140 & 70 & 68 & 68 & 66 & 66 & 209 & 104 & 104 & 103 & 102 & 100 \\
72 & 35 & 34 & 32 & 32 & 32 & 141 & 70 & 69 & 68 & 67 & 66 & 210 & 105 & 104 & 104 & 102 & 101 \\
73 & 36 & 34 & 33 & 32 & 32 & 142 & 71 & 70 & 68 & 68 & 67 & 211 & 106 & 104 & 104 & 103 & 102 \\
74 & 36 & 35 & 34 & 33 & 32 & 143 & 72 & 70 & 69 & 68 & 68 & 212 & 106 & 105 & 104 & 104 & 102 \\
75 & 36 & 36 & 34 & 34 & 33 & 144 & 72 & 71 & 70 & 68 & 68 & 213 & 107 & 106 & 105 & 104 & 103 \\
76 & 37 & 36 & 35 & 34 & 34 & 145 & 72 & 72 & 70 & 69 & 68 & 214 & 108 & 106 & 106 & 104 & 104 \\
77 & 38 & 36 & 36 & 35 & 34 & 146 & 72 & 72 & 71 & 70 & 69 & 215 & 108 & 107 & 106 & 104 & 104 \\
78 & 38 & 37 & 36 & 36 & 35 & 147 & 73 & 72 & 72 & 70 & 70 & 216 & 108 & 108 & 107 & 105 & 104 \\
79 & 39 & 38 & 36 & 36 & 36 & 148 & 74 & 72 & 72 & 71 & 70 & 217 & 109 & 108 & 108 & 106 & 104 \\
80 & 39 & 38 & 37 & 36 & 36 & 149 & 74 & 73 & 72 & 72 & 71 & 218 & 110 & 108 & 108 & 106 & 105 \\
81 & 40 & 39 & 38 & 37 & 36 & 150 & 75 & 74 & 72 & 72 & 72 & 219 & 110 & 109 & 108 & 107 & 106 \\
82 & 40 & 40 & 38 & 38 & 36 & 151 & 76 & 74 & 73 & 72 & 72 & 220 & 111 & 110 & 109 & 108 & 106 \\
83 & 40 & 40 & 39 & 38 & 37 & 152 & 76 & 75 & 74 & 72 & 72 & 221 & 112 & 110 & 110 & 108 & 107 \\
84 & 41 & 40 & 40 & 39 & 38 & 153 & 76 & 76 & 74 & 73 & 72 & 222 & 112 & 111 & 110 & 108 & 108 \\
85 & 42 & 40 & 40 & 40 & 38 & 154 & 77 & 76 & 75 & 74 & 73 & 223 & 112 & 112 & 111 & 109 & 108 \\
86 & 42 & 41 & 40 & 40 & 39 & 155 & 78 & 76 & 76 & 74 & 74 & 224 & 112 & 112 & 112 & 110 & 108 \\
87 & 43 & 42 & 40 & 40 & 40 & 156 & 78 & 77 & 76 & 75 & 74 & 225 & 112 & 112 & 112 & 110 & 109 \\
88 & 44 & 42 & 41 & 40 & 40 & 157 & 79 & 78 & 76 & 76 & 75 & 226 & 113 & 112 & 112 & 110 & 110 \\
89 & 44 & 43 & 42 & 41 & 40 & 158 & 80 & 78 & 77 & 76 & 76 & 227 & 114 & 112 & 112 & 111 & 110 \\
90 & 44 & 44 & 42 & 42 & 40 & 159 & 80 & 79 & 78 & 76 & 76 & 228 & 114 & 113 & 112 & 112 & 111 \\
91 & 45 & 44 & 43 & 42 & 41 & 160 & 80 & 80 & 78 & 77 & 76 & 229 & 115 & 114 & 113 & 112 & 112 \\
92 & 46 & 45 & 44 & 43 & 42 & 161 & 80 & 80 & 79 & 78 & 77 & 230 & 116 & 114 & 114 & 112 & 112 \\
93 & 46 & 46 & 44 & 44 & 42 & 162 & 80 & 80 & 80 & 78 & 78 & 231 & 116 & 115 & 114 & 112 & 112 \\
94 & 47 & 46 & 44 & 44 & 43 & 163 & 81 & 80 & 80 & 79 & 78 & 232 & 116 & 116 & 115 & 113 & 112 \\
95 & 47 & 47 & 45 & 44 & 44 & 164 & 82 & 80 & 80 & 80 & 78 & 233 & 117 & 116 & 116 & 114 & 112 \\
96 & 48 & 47 & 46 & 44 & 44 & 165 & 82 & 81 & 80 & 80 & 79 & 234 & 118 & 116 & 116 & 114 & 113 \\
97 & 48 & 48 & 46 & 45 & 44 & 166 & 83 & 82 & 80 & 80 & 80 & 235 & 118 & 117 & 116 & 115 & 114 \\
98 & 48 & 48 & 47 & 46 & 44 & 167 & 84 & 82 & 81 & 80 & 80 & 236 & 119 & 118 & 117 & 116 & 114 \\
99 & 48 & 48 & 48 & 46 & 45 & 168 & 84 & 83 & 82 & 80 & 80 & 237 & 120 & 118 & 118 & 116 & 115 \\
100 & 49 & 48 & 48 & 47 & 46 & 169 & 84 & 84 & 82 & 81 & 80 & 238 & 120 & 119 & 118 & 116 & 116 \\
101 & 50 & 49 & 48 & 48 & 46 & 170 & 85 & 84 & 83 & 82 & 81 & 239 & 120 & 120 & 119 & 116 & 116 \\
102 & 50 & 50 & 48 & 48 & 47 & 171 & 86 & 84 & 84 & 82 & 82 & 240 & 120 & 120 & 120 & 117 & 116 \\
103 & 51 & 50 & 48 & 48 & 48 & 172 & 86 & 85 & 84 & 83 & 82 & 241 & 121 & 120 & 120 & 118 & 117 \\
104 & 52 & 51 & 49 & 48 & 48 & 173 & 87 & 86 & 84 & 84 & 83 & 242 & 122 & 120 & 120 & 118 & 118 \\
105 & 52 & 52 & 50 & 48 & 48 & 174 & 88 & 86 & 85 & 84 & 84 & 243 & 122 & 121 & 120 & 119 & 118 \\
106 & 52 & 52 & 50 & 49 & 48 & 175 & 88 & 87 & 86 & 84 & 84 & 244 & 123 & 122 & 121 & 120 & 119 \\
107 & 53 & 52 & 51 & 50 & 49 & 176 & 88 & 88 & 86 & 85 & 84 & 245 & 124 & 122 & 122 & 120 & 120 \\
108 & 54 & 53 & 52 & 50 & 50 & 177 & 88 & 88 & 87 & 86 & 84 & 246 & 124 & 123 & 122 & 120 & 120 \\
109 & 54 & 54 & 52 & 51 & 50 & 178 & 89 & 88 & 88 & 86 & 85 & 247 & 124 & 124 & 123 & 121 & 120 \\
110 & 55 & 54 & 52 & 52 & 51 & 179 & 90 & 88 & 88 & 87 & 86 & 248 & 125 & 124 & 124 & 122 & 120 \\
111 & 55 & 55 & 53 & 52 & 52 & 180 & 90 & 89 & 88 & 88 & 86 & 249 & 126 & 124 & 124 & 122 & 121 \\
112 & 56 & 55 & 54 & 52 & 52 & 181 & 91 & 90 & 88 & 88 & 87 & 250 & 126 & 125 & 124 & 122 & 122 \\
113 & 56 & 56 & 54 & 53 & 52 & 182 & 92 & 90 & 89 & 88 & 88 & 251 & 127 & 126 & 125 & 123 & 122 \\
114 & 56 & 56 & 55 & 54 & 53 & 183 & 92 & 91 & 90 & 88 & 88 & 252 & 128 & 126 & 126 & 124 & 123 \\
115 & 57 & 56 & 56 & 54 & 54 & 184 & 92 & 92 & 90 & 89 & 88 & 253 & 128 & 127 & 126 & 125 & 124 \\
116 & 58 & 57 & 56 & 55 & 54 & 185 & 93 & 92 & 91 & 90 & 88 & 254 & 128 & 128 & 127 & 126 & 124 \\
117 & 58 & 58 & 56 & 56 & 55 & 186 & 94 & 92 & 92 & 90 & 89 & 255 & 128 & 128 & 128 & 127 & 124 \\
118 & 59 & 58 & 56 & 56 & 56 & 187 & 94 & 93 & 92 & 90 & 90 & 256 & 128 & 128 & 128 & 128 & 124 \\
119 & 60 & 59 & 57 & 56 & 56 & 188 & 95 & 94 & 93 & 91 & 90 &  &  &  &  &  &  \\
\end{longtable}
\endgroup
\end{corollary}

\begin{example}
The recursion can be strictly sharper than the bound for general linear codes.  For instance, Grassl's table gives
$d_{g}(51,8)=24$.  Starting from the exact dimension-five
values in~\cite{LLFS-2024} and applying the recursion successively gives
$U(50,7)=23$, $U(49,6)=23$, and $U(51,7)=24$.  Hence
\[
U(51,8)
=\min\{24,24,\max\{23,23\}\}=23.
\]
Therefore Corollary~\ref{recursive-upper-bound} yields
$d_{LCD}(51,8)\leq23$, improving the Grassl-table upper bound by
one for binary LCD $[51,8]$ codes.
\end{example}

\subsection{Methodology}

In this section, we describe the methodology used to construct shortest LCD embeddings. Given a generator matrix $G$ of a linear code $\mathcal{C}$ over $\mathbb{F}_2$, our goal is to find, among the shortest LCD embeddings generated by our search, one with the largest possible minimum distance.

We aim to obtain $[n, k, d]$ LCD codes with large minimum distance $d$ for parameters in the range
\[
    51\leq n\leq 256
    \qquad\text{and}\qquad
    6\leq k\leq 10.
\]
In the case of LCD embeddings, greedy search is feasible because the admissible operations are relatively simple. Let $\mathcal{C}$ be a code over $\mathbb{F}_2$ such that $\dim(\operatorname{Hull}(\mathcal{C}))=\ell$ with generator matrix
\[
G=\begin{bmatrix}
    H\\
    A
\end{bmatrix}
\]
where $H$ generates $\operatorname{Hull}(\mathcal{C})$. Let
\[
G'=\begin{bmatrix}
H & D\\
A & B
\end{bmatrix}
\]
for some $\ell\times \ell$ matrix $D$ and $(k-\ell)\times \ell$ matrix $B$. Then $G'$ generates a shortest LCD embedding of $\mathcal{C}$. By Theorem~\ref{shortest-lcd-embedding}, the only structural constraint is that the block $D$ must be invertible, while the block $B$ is arbitrary.
\begin{enumerate}
    \item For the matrix $B$, the implemented moves are entry-flip operations
    \[
        \text{entry-flip:}\quad B_{ij}\leftarrow B_{ij}+1,
        \qquad
        1\leq i\leq k-\ell,\quad 1\leq j\leq \ell
    \]
    where the addition is over $\mathbb{F}_2$.
    \item For the invertible matrix $D$, let $r_i(D)$ be the $i$-th row of $D$. Then the implemented moves are given by
    \begin{align*}
        \text{row-add:}\quad
        & r_i(D)\leftarrow r_i(D)+r_j(D),
        \qquad 1\leq i,j\leq \ell,\quad i\neq j,\\
        \text{row-swap:}\quad
        & r_i(D)\leftrightarrow r_j(D),
        \qquad\qquad~~~~~ 1\leq i<j\leq \ell.
    \end{align*}
    These elementary row operations preserve invertibility and generate all of $GL_\ell(\mathbb{F}_2)$.
\end{enumerate}

Next, we describe the experimental design. We performed Experiment 1 twice and Experiment 2 twice.
    \begin{enumerate}
    \item \textbf{Experiment 1.} We allow both matrices $B$ and $D$ to change during the same search process. At each step, a local move is chosen randomly from the set of admissible operations.

    \item \textbf{Experiment 2.} We fix one matrix and update the other until a local optimum is reached. More precisely, when $D$ is fixed, we iteratively apply entry-flip operations to $B$. When $B$ is fixed, we update $D$ by choosing between row-addition and row-swap operations.
    \end{enumerate}

Experiment 1: joint local search.
    \begin{enumerate}
        \item Let $D_0$ be an invertible matrix from $GL_\ell(\mathbb{F}_2)$ and $B_0$ be a $(k-\ell)\times \ell$ arbitrary matrix.
        \item Randomly choose an operation from row-addition, row-swap operations of $D_0$ and entry-flip operation from $B_0$.
        \item If the resulting state improves the current fitness value, then the move is accepted. Otherwise, the current state is retained.
        \item This procedure is repeated until no further improvement is obtained or until the maximum number of steps, set to $60$, is reached. When the search reaches a local optimum, we restart it from a new initial state, up to $15$ times.
    \end{enumerate}

Experiment 2: block-coordinate local search.
    \begin{enumerate}
        \item Let $B_0$ be an initial $(k-\ell) \times \ell$ matrix over $\mathbb{F}_2$ and $D_0$ be an invertible matrix in $GL_\ell(\mathbb{F}_2)$.
        \item Fix $B$ and update $D$ by applying row-addition and row-swap operations. This $D$-phase is repeated until a local optimum with respect to $D$ is reached or until $30$ local moves have been performed.
        \item With the updated $D$ fixed, update $B$ by applying entry-flip operations. This $B$-phase is repeated until a local optimum with respect to $B$ is reached or until $30$ local moves have been performed. We repeat the $D$-phase and the $B$-phase alternately until the fitness value no longer improves.
        \item When the alternating process reaches a local optimum, we restart it from a new initial state $(B_0,D_0)$, up to $15$ times.
    \end{enumerate}

To evaluate each shortest LCD embedding, we use a lexicographic fitness function. For a state $(B,D)$, let $\mathcal{C}_{B,D}$ be the shortest LCD embedding obtained from the corresponding generator matrix. Let
    \[
    d_{B,D}=d(\mathcal{C}_{B,D}),
    \]
    and let $A_i(\mathcal{C}_{B,D})$ denote the number of codewords of weight $i$ in $\mathcal{C}_{B,D}$. We define the fitness of $(B,D)$ by
    \[
    F(B,D)=\left(d_{B,D},-A_{d_{B,D}}(\mathcal{C}_{B,D})\right).
    \]
    Fitness values are compared lexicographically. Thus, a shortest LCD embedding with larger minimum distance is preferred. If two shortest LCD embeddings have the same minimum distance, then the one with fewer minimum-weight codewords is preferred.

\section{Numerical results}
We conducted experiments on a total of $1019$ base codes obtained from the MAGMA BKLC database. After the embedding process, some distinct BKLC base codes produced LCD codes with identical $[n, k]$ parameters. To avoid counting such redundant cases, we kept only one representative for each parameter set, namely, the code with the largest minimum distance among those with the same parameters. After removing, for each parameter set, all but a representative with the largest minimum distance, $756$ representative codes remain. The results for these representative codes are as follows.
\begin{itemize}
    \item Bound attained: 19 codes.
    \item 1 below the bound: 159 codes.
    \item 2 below the bound: 255 codes.
    \item 3 below the bound: 219 codes.
    \item 4 below the bound: 76 codes.
    \item 5 below the bound: 23 codes.
    \item 6 below the bound: 4 codes.
    \item 7 below the bound: 1 code.
\end{itemize}
The generator matrices of the 19 LCD codes that attain the bound are given below, which are presented in hexadecimal format $(A=10, B=11, C=12, D=13, E=14, F=15)$ after appending them with zero columns to ensure that the length is a multiple of 4.

\begingroup
\setlength{\arraycolsep}{0pt}
\begin{itemize}
\item An LCD $[69,6,33]$ code obtained from a $[63,6,32]$ code, whose generator matrix is
{\tiny
\[
G_{\mathrm{LCD, 1}}=\left[
\begin{array}{cccccccccccccccccc}
9 & 6 & 6 & 9 & 6 & 9 & 9 & 6 & 9 & 6 & 6 & 9 & 6 & 9 & 9 & 7 & 0 & 0 \\
5 & 5 & 5 & 5 & 5 & 5 & 5 & 5 & A & A & A & A & A & A & A & A & 8 & 0 \\
3 & 3 & 3 & 3 & 3 & 3 & 3 & 3 & C & C & C & C & C & C & C & C & 4 & 0 \\
0 & F & 0 & F & 0 & F & 0 & F & F & 0 & F & 0 & F & 0 & F & 0 & 2 & 0 \\
0 & 0 & F & F & 0 & 0 & F & F & F & F & 0 & 0 & F & F & 0 & 0 & 1 & 0 \\
0 & 0 & 0 & 0 & F & F & F & F & F & F & F & F & 0 & 0 & 0 & 0 & 0 & 8
\end{array}
\right]_{16}.
\]
}

\item An LCD $[132,6,65]$ code obtained from a $[126,6,64]$ code, whose generator matrix is
{\tiny
\[
G_{\mathrm{LCD, 2}}=\left[
\begin{array}{ccccccccccccccccccccccccccccccccc}
C & 3 & 3 & C & 3 & C & C & 3 & 3 & C & C & 3 & C & 3 & 3 & C & C & 3 & 3 & C & 3 & C & C & 3 & 3 & C & C & 3 & C & 3 & 3 & E & 0 \\
3 & 3 & 3 & 3 & 3 & 3 & 3 & 3 & 3 & 3 & 3 & 3 & 3 & 3 & 3 & 3 & C & C & C & C & C & C & C & C & C & C & C & C & C & C & C & D & 0 \\
0 & F & 0 & F & 0 & F & 0 & F & 0 & F & 0 & F & 0 & F & 0 & F & F & 0 & F & 0 & F & 0 & F & 0 & F & 0 & F & 0 & F & 0 & F & 0 & 8 \\
0 & 0 & F & F & 0 & 0 & F & F & 0 & 0 & F & F & 0 & 0 & F & F & F & F & 0 & 0 & F & F & 0 & 0 & F & F & 0 & 0 & F & F & 0 & 0 & 4 \\
0 & 0 & 0 & 0 & F & F & F & F & 0 & 0 & 0 & 0 & F & F & F & F & F & F & F & F & 0 & 0 & 0 & 0 & F & F & F & F & 0 & 0 & 0 & 0 & 2 \\
0 & 0 & 0 & 0 & 0 & 0 & 0 & 0 & F & F & F & F & F & F & F & F & F & F & F & F & F & F & F & F & 0 & 0 & 0 & 0 & 0 & 0 & 0 & 0 & 1
\end{array}
\right]_{16}.
\]
}

\item An LCD $[162,6,80]$ code obtained from a $[157,6,79]$ code, whose generator matrix is
{\tiny
\[
G_{\mathrm{LCD, 3}}=\left[
\begin{array}{ccccccccccccccccccccccccccccccccccccccccc}
9 & 6 & 6 & 9 & 6 & 9 & 9 & 7 & 8 & 6 & 7 & 8 & 7 & 9 & 8 & 6 & 7 & 9 & 8 & 7 & 8 & 6 & 7 & 9 & 8 & 6 & 7 & 8 & 7 & 9 & 8 & 6 & 7 & 9 & 8 & 7 & 8 & 6 & 7 & A & 8 \\
5 & 5 & 5 & 5 & A & A & A & A & 6 & 6 & 6 & 6 & 6 & 6 & 6 & 7 & 9 & 9 & 9 & 9 & 9 & 9 & 9 & 8 & 6 & 6 & 6 & 6 & 6 & 6 & 6 & 7 & 9 & 9 & 9 & 9 & 9 & 9 & 9 & F & 0 \\
3 & 3 & 3 & 3 & C & C & C & C & 1 & E & 1 & E & 1 & E & 1 & F & E & 1 & E & 1 & E & 1 & E & 0 & 1 & E & 1 & E & 1 & E & 1 & F & E & 1 & E & 1 & E & 1 & E & 3 & 4 \\
0 & F & 0 & F & F & 0 & F & 0 & 0 & 1 & F & E & 0 & 1 & F & F & F & E & 0 & 1 & F & E & 0 & 0 & 0 & 1 & F & E & 0 & 1 & F & F & F & E & 0 & 1 & F & E & 0 & 2 & 4 \\
0 & 0 & F & F & F & F & 0 & 0 & 0 & 0 & 0 & 1 & F & F & F & F & F & F & F & E & 0 & 0 & 0 & 0 & 0 & 0 & 0 & 1 & F & F & F & F & F & F & F & E & 0 & 0 & 0 & 6 & 8 \\
5 & 5 & 5 & 5 & 5 & 5 & 5 & 4 & 6 & 6 & 6 & 6 & 6 & 6 & 6 & 6 & 6 & 6 & 6 & 6 & 6 & 6 & 6 & 7 & 9 & 9 & 9 & 9 & 9 & 9 & 9 & 9 & 9 & 9 & 9 & 9 & 9 & 9 & 9 & D & 0
\end{array}
\right]_{16}.
\]
}

\item An LCD $[195,6,97]$ code obtained from a $[189,6,96]$ code, whose generator matrix is
{\tiny
\[
G_{\mathrm{LCD, 4}}=\left[
\begin{array}{ccccccccccccccccccccccccccccccccccccccccccccccccc}
9 & 6 & 6 & 9 & 6 & 9 & 9 & 6 & 6 & 9 & 9 & 6 & 9 & 6 & 6 & B & 0 & C & F & 0 & F & 3 & 0 & C & F & 3 & 0 & F & 0 & C & F & 3 & 0 & C & F & 0 & F & 3 & 0 & C & F & 3 & 0 & F & 0 & C & F & C & 0 \\
5 & 5 & 5 & 5 & 5 & 5 & 5 & 5 & 5 & 5 & 5 & 5 & 5 & 5 & 5 & 4 & C & C & C & C & C & C & C & C & C & C & C & C & C & C & C & F & 3 & 3 & 3 & 3 & 3 & 3 & 3 & 3 & 3 & 3 & 3 & 3 & 3 & 3 & 3 & A & 0 \\
3 & 3 & 3 & 3 & 3 & 3 & 3 & 3 & C & C & C & C & C & C & C & C & 3 & C & 3 & C & 3 & C & 3 & C & 3 & C & 3 & C & 3 & C & 3 & F & C & 3 & C & 3 & C & 3 & C & 3 & C & 3 & C & 3 & C & 3 & C & 1 & 0 \\
0 & F & 0 & F & 0 & F & 0 & F & F & 0 & F & 0 & F & 0 & F & 0 & 0 & 3 & F & C & 0 & 3 & F & C & 0 & 3 & F & C & 0 & 3 & F & F & F & C & 0 & 3 & F & C & 0 & 3 & F & C & 0 & 3 & F & C & 0 & 0 & 8 \\
0 & 0 & F & F & 0 & 0 & F & F & F & F & 0 & 0 & F & F & 0 & 0 & 0 & 0 & 0 & 3 & F & F & F & C & 0 & 0 & 0 & 3 & F & F & F & F & F & F & F & C & 0 & 0 & 0 & 3 & F & F & F & C & 0 & 0 & 0 & 0 & 4 \\
0 & 0 & 0 & 0 & F & F & F & F & F & F & F & F & 0 & 0 & 0 & 0 & 0 & 0 & 0 & 0 & 0 & 0 & 0 & 3 & F & F & F & F & F & F & F & F & F & F & F & F & F & F & F & C & 0 & 0 & 0 & 0 & 0 & 0 & 0 & 0 & 2
\end{array}
\right]_{16}.
\]
}

\item An LCD $[225,6,112]$ code obtained from a $[220,6,111]$ code, whose generator matrix is
{\tiny
\[
G_{\mathrm{LCD, 5}}=\left[
\begin{array}{ccccccccccccccccccccccccccccccccccccccccccccccccccccccccc}
9 & 6 & 6 & 9 & 6 & 9 & 9 & 7 & 2 & C & D & 2 & D & 3 & 2 & C & D & 3 & 2 & D & 2 & C & D & 6 & 1 & 9 & E & 1 & E & 6 & 1 & 9 & E & 6 & 1 & E & 1 & 9 & E & 6 & 1 & 9 & E & 1 & E & 6 & 1 & 9 & E & 6 & 1 & E & 1 & 9 & F & 8 & 0 \\
5 & 5 & 5 & 5 & A & A & A & A & A & A & A & B & 5 & 5 & 5 & 5 & 5 & 5 & 5 & 4 & A & A & A & 9 & 9 & 9 & 9 & 9 & 9 & 9 & 9 & E & 6 & 6 & 6 & 6 & 6 & 6 & 6 & 1 & 9 & 9 & 9 & 9 & 9 & 9 & 9 & E & 6 & 6 & 6 & 6 & 6 & 6 & 7 & C & 8 \\
3 & 3 & 3 & 3 & C & C & C & C & 6 & 6 & 6 & 7 & 9 & 9 & 9 & 8 & 6 & 6 & 6 & 7 & 9 & 9 & 9 & 8 & 7 & 8 & 7 & 8 & 7 & 8 & 7 & F & 8 & 7 & 8 & 7 & 8 & 7 & 8 & 0 & 7 & 8 & 7 & 8 & 7 & 8 & 7 & F & 8 & 7 & 8 & 7 & 8 & 7 & 8 & 1 & 8 \\
0 & F & 0 & F & F & 0 & F & 0 & 1 & E & 1 & F & E & 1 & E & 0 & 1 & E & 1 & F & E & 1 & E & 0 & 0 & 7 & F & 8 & 0 & 7 & F & F & F & 8 & 0 & 7 & F & 8 & 0 & 0 & 0 & 7 & F & 8 & 0 & 7 & F & F & F & 8 & 0 & 7 & F & 8 & 0 & 7 & 0 \\
0 & 0 & F & F & F & F & 0 & 0 & 0 & 1 & F & F & F & E & 0 & 0 & 0 & 1 & F & F & F & E & 0 & 0 & 0 & 0 & 0 & 7 & F & F & F & F & F & F & F & 8 & 0 & 0 & 0 & 0 & 0 & 0 & 0 & 7 & F & F & F & F & F & F & F & 8 & 0 & 0 & 0 & A & 8 \\
5 & 5 & 5 & 5 & 5 & 5 & 5 & 4 & A & A & A & A & A & A & A & A & A & A & A & A & A & A & A & 9 & 9 & 9 & 9 & 9 & 9 & 9 & 9 & 9 & 9 & 9 & 9 & 9 & 9 & 9 & 9 & E & 6 & 6 & 6 & 6 & 6 & 6 & 6 & 6 & 6 & 6 & 6 & 6 & 6 & 6 & 7 & 4 & 0
\end{array}
\right]_{16}.
\]
}

\item An LCD $[55,7,25]$ code obtained from a $[49,7,23]$ code, whose generator matrix is
{\tiny
\[
G_{\mathrm{LCD, 6}}=\left[
\begin{array}{cccccccccccccc}
8 & 1 & 9 & 7 & 7 & 5 & 6 & 2 & 9 & A & D & 4 & F & E \\
4 & 2 & F & F & 0 & 9 & C & 8 & 9 & 4 & 5 & B & B & A \\
2 & 3 & 4 & B & 3 & 7 & 9 & D & 9 & 3 & 1 & C & 7 & A \\
1 & 1 & A & 5 & 9 & B & C & E & C & 9 & 8 & E & 3 & 6 \\
0 & 8 & D & 2 & C & D & E & 7 & 6 & 4 & C & 7 & 5 & 4 \\
0 & 4 & 6 & 9 & 6 & 6 & F & 3 & B & 2 & 6 & 3 & E & 4 \\
4 & 1 & D & 1 & E & 3 & 0 & D & A & 1 & F & 2 & C & 6
\end{array}
\right]_{16}.
\]
}

\item An LCD $[57,7,26]$ code obtained from a $[56,7,26]$ code, whose generator matrix is
{\tiny
\[
G_{\mathrm{LCD, 7}}=\left[
\begin{array}{ccccccccccccccc}
6 & 3 & 3 & 6 & B & 6 & 7 & 6 & 6 & D & A & 6 & 5 & E & 8 \\
9 & B & 3 & 9 & 5 & 6 & 4 & 6 & 9 & 3 & A & 5 & A & C & 8 \\
5 & A & A & 5 & 2 & 5 & 5 & 5 & 4 & B & 5 & 5 & 7 & 9 & 0 \\
3 & 9 & 9 & 3 & 6 & C & D & 3 & 2 & 6 & C & C & D & 8 & 0 \\
0 & 7 & 8 & F & 1 & C & 3 & 0 & E & 1 & C & 3 & C & 7 & 0 \\
0 & 0 & 7 & F & 0 & 3 & F & 0 & 1 & F & C & 0 & 3 & F & 0 \\
0 & 0 & 0 & 0 & 0 & 0 & 0 & F & F & F & F & F & F & F & 0
\end{array}
\right]_{16}.
\]
}

\item An LCD $[69,7,32]$ code obtained from a $[63,7,31]$ code, whose generator matrix is
{\tiny
\[
G_{\mathrm{LCD, 8}}=\left[
\begin{array}{cccccccccccccccccc}
9 & 6 & 6 & 9 & 6 & 9 & 9 & 6 & 9 & 6 & 6 & 9 & 6 & 9 & 9 & 6 & 1 & 0 \\
5 & 5 & 5 & 5 & 5 & 5 & 5 & 5 & A & A & A & A & A & A & A & A & C & 8 \\
3 & 3 & 3 & 3 & 3 & 3 & 3 & 3 & C & C & C & C & C & C & C & D & A & 0 \\
0 & F & 0 & F & 0 & F & 0 & F & F & 0 & F & 0 & F & 0 & F & 1 & 6 & 0 \\
0 & 0 & F & F & 0 & 0 & F & F & F & F & 0 & 0 & F & F & 0 & 0 & 2 & 0 \\
0 & 0 & 0 & 0 & F & F & F & F & F & F & F & F & 0 & 0 & 0 & 1 & 3 & 0 \\
9 & 6 & 6 & 9 & 6 & 9 & 9 & 6 & 6 & 9 & 9 & 6 & 9 & 6 & 6 & 9 & 1 & 8
\end{array}
\right]_{16}.
\]
}

\item An LCD $[72,7,34]$ code obtained from a $[71,7,34]$ code, whose generator matrix is
{\tiny
\[
G_{\mathrm{LCD, 9}}=\left[
\begin{array}{cccccccccccccccccc}
F & F & F & F & F & F & F & F & F & F & F & F & F & F & F & F & 0 & 1 \\
8 & 1 & B & 8 & E & 9 & C & 3 & 3 & A & D & 1 & 5 & C & 9 & 7 & F & C \\
4 & 1 & 6 & 4 & 9 & D & 2 & 2 & A & 7 & B & 9 & F & 2 & D & C & 8 & 2 \\
2 & 0 & F & 6 & 6 & E & 3 & 0 & D & E & 9 & 4 & 7 & 5 & 2 & E & 4 & 2 \\
1 & 0 & 2 & 3 & 4 & 7 & B & 9 & 6 & F & 4 & A & 9 & C & D & E & 2 & 2 \\
0 & 8 & 0 & D & F & 7 & 1 & C & 3 & 3 & C & D & 6 & A & 2 & F & 1 & 2 \\
0 & 4 & 1 & 2 & 8 & F & C & E & 9 & D & A & 6 & 3 & B & 5 & 7 & 0 & A
\end{array}
\right]_{16}.
\]
}

\item An LCD $[73,7,34]$ code obtained from a $[72,7,34]$ code, whose generator matrix is
{\tiny
\[
G_{\mathrm{LCD, 10}}=\left[
\begin{array}{ccccccccccccccccccc}
F & F & F & F & F & F & F & F & F & F & F & F & F & F & F & F & 0 & 0 & 8 \\
8 & 1 & B & 8 & E & 9 & C & 3 & 3 & A & D & 1 & 5 & C & 9 & 7 & F & C & 0 \\
4 & 1 & 6 & 4 & 9 & D & 2 & 2 & A & 7 & B & 9 & F & 2 & D & C & 8 & 2 & 0 \\
2 & 0 & F & 6 & 6 & E & 3 & 0 & D & E & 9 & 4 & 7 & 5 & 2 & E & 4 & 2 & 0 \\
1 & 0 & 2 & 3 & 4 & 7 & B & 9 & 6 & F & 4 & A & 9 & C & D & E & 2 & 2 & 0 \\
0 & 8 & 0 & D & F & 7 & 1 & C & 3 & 3 & C & D & 6 & A & 2 & F & 1 & 2 & 0 \\
0 & 4 & 1 & 2 & 8 & F & C & E & 9 & D & A & 6 & 3 & B & 5 & 7 & 0 & A & 0
\end{array}
\right]_{16}.
\]
}

\item An LCD $[85,7,40]$ code obtained from a $[84,7,40]$ code, whose generator matrix is
{\tiny
\[
G_{\mathrm{LCD, 11}}=\left[
\begin{array}{cccccccccccccccccccccc}
F & E & 0 & 0 & 0 & 0 & 0 & F & F & F & F & F & 8 & 0 & F & E & 0 & 3 & F & 8 & 0 & 8 \\
8 & 1 & 8 & 3 & 0 & 4 & 8 & B & 1 & A & 2 & A & 5 & C & F & 9 & 5 & B & 7 & 7 & E & 8 \\
4 & 0 & C & 1 & 8 & 2 & 4 & 5 & 8 & D & 1 & 5 & 2 & E & 7 & C & A & D & B & B & F & 8 \\
2 & 0 & 6 & 0 & C & 1 & 2 & 2 & C & 6 & 8 & A & 9 & 7 & 3 & F & 5 & 6 & D & D & F & 0 \\
1 & 0 & 3 & 0 & 6 & 0 & 9 & 1 & 6 & 3 & 6 & 5 & 4 & B & 9 & F & A & B & 6 & E & F & 0 \\
0 & 8 & 1 & 8 & 3 & 4 & 4 & 8 & B & 1 & 9 & 2 & E & 5 & C & E & D & 7 & B & 7 & 7 & 0 \\
0 & 4 & 0 & C & 1 & A & 2 & C & 4 & 8 & E & 9 & 7 & 2 & E & 7 & 6 & 9 & D & F & B & 0
\end{array}
\right]_{16}.
\]
}

\item An LCD $[134,7,65]$ code obtained from a $[127,7,64]$ code, whose generator matrix is
{\tiny
\[
G_{\mathrm{LCD, 12}}=\left[
\begin{array}{cccccccccccccccccccccccccccccccccc}
9 & 6 & 6 & 9 & 6 & 9 & 9 & 6 & 6 & 9 & 9 & 6 & 9 & 6 & 6 & 9 & 6 & 9 & 9 & 6 & 9 & 6 & 6 & 9 & 9 & 6 & 6 & 9 & 6 & 9 & 9 & 7 & 0 & 0 \\
5 & 5 & 5 & 5 & 5 & 5 & 5 & 5 & 5 & 5 & 5 & 5 & 5 & 5 & 5 & 5 & A & A & A & A & A & A & A & A & A & A & A & A & A & A & A & A & 8 & 0 \\
3 & 3 & 3 & 3 & 3 & 3 & 3 & 3 & 3 & 3 & 3 & 3 & 3 & 3 & 3 & 3 & C & C & C & C & C & C & C & C & C & C & C & C & C & C & C & C & 4 & 0 \\
0 & F & 0 & F & 0 & F & 0 & F & 0 & F & 0 & F & 0 & F & 0 & F & F & 0 & F & 0 & F & 0 & F & 0 & F & 0 & F & 0 & F & 0 & F & 0 & 2 & 0 \\
0 & 0 & F & F & 0 & 0 & F & F & 0 & 0 & F & F & 0 & 0 & F & F & F & F & 0 & 0 & F & F & 0 & 0 & F & F & 0 & 0 & F & F & 0 & 0 & 1 & 0 \\
0 & 0 & 0 & 0 & F & F & F & F & 0 & 0 & 0 & 0 & F & F & F & F & F & F & F & F & 0 & 0 & 0 & 0 & F & F & F & F & 0 & 0 & 0 & 0 & 0 & 8 \\
0 & 0 & 0 & 0 & 0 & 0 & 0 & 0 & F & F & F & F & F & F & F & F & F & F & F & F & F & F & F & F & 0 & 0 & 0 & 0 & 0 & 0 & 0 & 0 & 0 & 4
\end{array}
\right]_{16}.
\]
}

\item An LCD $[136,7,66]$ code obtained from a $[135,7,66]$ code, whose generator matrix is
{\tiny
\[
G_{\mathrm{LCD, 13}}=\left[
\begin{array}{cccccccccccccccccccccccccccccccccc}
7 & F & 0 & 2 & 0 & C & 2 & 8 & F & 2 & 2 & C & E & A & 7 & D & 0 & E & 2 & 4 & D & A & D & E & C & 6 & 9 & 7 & 7 & 3 & 2 & B & F & F \\
8 & 1 & 0 & 6 & 1 & 4 & 7 & 9 & 1 & 6 & 7 & 5 & 3 & E & 8 & 7 & 1 & 2 & 6 & D & 6 & F & 6 & 3 & 4 & B & B & 9 & 9 & 5 & 7 & F & F & 8 \\
4 & 1 & 8 & 5 & 1 & E & 4 & 5 & 9 & D & 4 & F & A & 1 & C & 4 & 9 & B & 5 & B & D & 8 & D & 2 & E & E & 6 & 5 & 5 & F & C & 1 & 0 & 4 \\
2 & 0 & C & 2 & 8 & F & 2 & 2 & C & E & A & 7 & D & 0 & E & 2 & 4 & D & A & D & E & C & 6 & 9 & 7 & 7 & 3 & 2 & A & F & E & 0 & 8 & 2 \\
1 & 0 & 6 & 1 & 4 & 7 & 9 & 1 & 6 & 7 & 5 & 3 & E & 8 & 7 & 1 & 2 & 6 & D & 6 & F & 6 & 3 & 4 & B & B & 9 & 9 & 5 & 7 & F & 0 & 4 & 2 \\
0 & 8 & 3 & 0 & A & 3 & C & 8 & B & 3 & A & 9 & F & 4 & 3 & 8 & 9 & 3 & 6 & B & 7 & B & 1 & A & 5 & D & C & C & A & B & F & 8 & 2 & 2 \\
0 & 4 & 1 & 8 & 5 & 1 & E & 4 & 5 & 9 & D & 4 & F & A & 1 & C & 4 & 9 & B & 5 & B & D & 8 & D & 2 & E & E & 6 & 5 & 5 & F & C & 1 & 2
\end{array}
\right]_{16}.
\]
}

\item An LCD $[199,7,98]$ code obtained from a $[198,7,98]$ code, whose generator matrix is
{\tiny
\[
G_{\mathrm{LCD, 14}}=\left[
\begin{array}{cccccccccccccccccccccccccccccccccccccccccccccccccc}
7 & F & 0 & 2 & 0 & C & 2 & 8 & F & 2 & 2 & C & E & A & 7 & D & 0 & E & 2 & 4 & D & A & D & E & C & 6 & 9 & 7 & 7 & 3 & 2 & B & F & F & F & F & F & F & F & F & F & F & F & F & F & F & F & E & 0 & 2 \\
8 & 1 & 0 & 6 & 1 & 4 & 7 & 9 & 1 & 6 & 7 & 5 & 3 & E & 8 & 7 & 1 & 2 & 6 & D & 6 & F & 6 & 3 & 4 & B & B & 9 & 9 & 5 & 7 & F & F & 8 & 2 & C & 4 & B & 7 & 5 & 0 & 5 & B & A & 2 & D & 8 & 6 & 0 & C \\
4 & 1 & 8 & 5 & 1 & E & 4 & 5 & 9 & D & 4 & F & A & 1 & C & 4 & 9 & B & 5 & B & D & 8 & D & 2 & E & E & 6 & 5 & 5 & F & C & 1 & 0 & 4 & A & 2 & 6 & 7 & 0 & C & 8 & F & E & 7 & 6 & B & 5 & 7 & F & 4 \\
2 & 0 & C & 2 & 8 & F & 2 & 2 & C & E & A & 7 & D & 0 & E & 2 & 4 & D & A & D & E & C & 6 & 9 & 7 & 7 & 3 & 2 & A & F & E & 0 & 8 & 2 & C & 9 & 3 & A & 4 & 5 & 4 & F & 7 & 3 & E & 5 & B & 9 & 0 & 4 \\
1 & 0 & 6 & 1 & 4 & 7 & 9 & 1 & 6 & 7 & 5 & 3 & E & 8 & 7 & 1 & 2 & 6 & D & 6 & F & 6 & 3 & 4 & B & B & 9 & 9 & 5 & 7 & F & 0 & 4 & 1 & E & C & D & C & 6 & 1 & B & D & 2 & 8 & E & A & 5 & C & 8 & 4 \\
0 & 8 & 3 & 0 & A & 3 & C & 8 & B & 3 & A & 9 & F & 4 & 3 & 8 & 9 & 3 & 6 & B & 7 & B & 1 & A & 5 & D & C & C & A & B & F & 8 & 2 & 0 & 4 & 6 & 8 & F & 7 & 2 & D & E & 9 & 5 & 3 & 9 & B & C & 4 & 4 \\
0 & 4 & 1 & 8 & 5 & 1 & E & 4 & 5 & 9 & D & 4 & F & A & 1 & C & 4 & 9 & B & 5 & B & D & 8 & D & 2 & E & E & 6 & 5 & 5 & F & C & 1 & 0 & 1 & B & E & E & 3 & 8 & 6 & 7 & 9 & A & D & 4 & 5 & E & 2 & 4
\end{array}
\right]_{16}.
\]
}

\item An LCD $[57,8,25]$ code obtained from a $[50,8,23]$ code, whose generator matrix is
{\tiny
\[
G_{\mathrm{LCD, 15}}=\left[
\begin{array}{ccccccccccccccc}
8 & 1 & 9 & 7 & 7 & 5 & 6 & 2 & 9 & A & D & 4 & B & 8 & 8 \\
4 & 0 & C & B & B & A & B & 1 & 4 & D & 6 & A & 5 & D & 8 \\
2 & 1 & 7 & F & 8 & 4 & E & 4 & 4 & A & 2 & D & C & 7 & 8 \\
1 & 1 & A & 5 & 9 & B & C & E & C & 9 & 8 & E & 0 & 7 & 0 \\
0 & 8 & D & 2 & C & D & E & 7 & 6 & 4 & C & 7 & 2 & 6 & 8 \\
0 & 4 & 6 & 9 & 6 & 6 & F & 3 & B & 2 & 6 & 3 & 8 & 8 & 0 \\
0 & 2 & 3 & 4 & B & 3 & 7 & 9 & D & 9 & 3 & 1 & D & C & 8 \\
8 & 0 & 8 & D & 2 & C & D & E & 7 & 6 & 4 & C & 6 & 0 & 8
\end{array}
\right]_{16}.
\]
}

\item An LCD $[134,8,64]$ code obtained from a $[127,8,63]$ code, whose generator matrix is
{\tiny
\[
G_{\mathrm{LCD, 16}}=\left[
\begin{array}{cccccccccccccccccccccccccccccccccc}
9 & 6 & 6 & 9 & 6 & 9 & 9 & 6 & 6 & 9 & 9 & 6 & 9 & 6 & 6 & 9 & 6 & 9 & 9 & 6 & 9 & 6 & 6 & 9 & 9 & 6 & 6 & 9 & 6 & 9 & 9 & 7 & 5 & C \\
5 & 5 & 5 & 5 & 5 & 5 & 5 & 5 & 5 & 5 & 5 & 5 & 5 & 5 & 5 & 5 & A & A & A & A & A & A & A & A & A & A & A & A & A & A & A & A & 6 & 8 \\
3 & 3 & 3 & 3 & 3 & 3 & 3 & 3 & 3 & 3 & 3 & 3 & 3 & 3 & 3 & 3 & C & C & C & C & C & C & C & C & C & C & C & C & C & C & C & C & B & 8 \\
0 & F & 0 & F & 0 & F & 0 & F & 0 & F & 0 & F & 0 & F & 0 & F & F & 0 & F & 0 & F & 0 & F & 0 & F & 0 & F & 0 & F & 0 & F & 1 & 6 & C \\
0 & 0 & F & F & 0 & 0 & F & F & 0 & 0 & F & F & 0 & 0 & F & F & F & F & 0 & 0 & F & F & 0 & 0 & F & F & 0 & 0 & F & F & 0 & 1 & D & 8 \\
0 & 0 & 0 & 0 & F & F & F & F & 0 & 0 & 0 & 0 & F & F & F & F & F & F & F & F & 0 & 0 & 0 & 0 & F & F & F & F & 0 & 0 & 0 & 0 & 8 & 0 \\
0 & 0 & 0 & 0 & 0 & 0 & 0 & 0 & F & F & F & F & F & F & F & F & F & F & F & F & F & F & F & F & 0 & 0 & 0 & 0 & 0 & 0 & 0 & 1 & 1 & 8 \\
5 & 5 & 5 & 5 & 5 & 5 & 5 & 5 & 5 & 5 & 5 & 5 & 5 & 5 & 5 & 5 & 5 & 5 & 5 & 5 & 5 & 5 & 5 & 5 & 5 & 5 & 5 & 5 & 5 & 5 & 5 & 4 & C & 8
\end{array}
\right]_{16}.
\]
}

\item An LCD $[138,8,66]$ code obtained from a $[136,8,66]$ code, whose generator matrix is
{\tiny
\[
G_{\mathrm{LCD, 17}}=\left[
\begin{array}{ccccccccccccccccccccccccccccccccccc}
8 & 0 & F & D & F & 3 & D & 7 & 0 & D & D & 3 & 1 & 5 & 8 & 2 & F & 1 & D & B & 2 & 5 & 2 & 1 & 3 & 9 & 6 & 8 & 8 & C & D & 5 & F & F & 8 \\
7 & F & 0 & 2 & 0 & C & 2 & 8 & F & 2 & 2 & C & E & A & 7 & D & 0 & E & 2 & 4 & D & A & D & E & C & 6 & 9 & 7 & 7 & 3 & 2 & B & F & E & 4 \\
4 & 0 & 7 & E & F & 9 & E & B & 8 & 6 & E & 9 & 8 & A & C & 1 & 7 & 8 & E & D & 9 & 2 & 9 & 0 & 9 & C & B & 4 & 4 & 6 & 6 & B & 0 & 3 & 8 \\
2 & 0 & C & 2 & 8 & F & 2 & 2 & C & E & A & 7 & D & 0 & E & 2 & 4 & D & A & D & E & C & 6 & 9 & 7 & 7 & 3 & 2 & A & F & E & 0 & 8 & 2 & 8 \\
1 & 0 & 6 & 1 & 4 & 7 & 9 & 1 & 6 & 7 & 5 & 3 & E & 8 & 7 & 1 & 2 & 6 & D & 6 & F & 6 & 3 & 4 & B & B & 9 & 9 & 5 & 7 & F & 0 & 4 & 2 & 0 \\
0 & 8 & 3 & 0 & A & 3 & C & 8 & B & 3 & A & 9 & F & 4 & 3 & 8 & 9 & 3 & 6 & B & 7 & B & 1 & A & 5 & D & C & C & A & B & F & 8 & 2 & 2 & 0 \\
0 & 4 & 1 & 8 & 5 & 1 & E & 4 & 5 & 9 & D & 4 & F & A & 1 & C & 4 & 9 & B & 5 & B & D & 8 & D & 2 & E & E & 6 & 5 & 5 & F & C & 1 & 2 & 0 \\
0 & 2 & 0 & C & 2 & 8 & F & 2 & 2 & C & E & A & 7 & D & 0 & E & 2 & 4 & D & A & D & E & C & 6 & 9 & 7 & 7 & 3 & 2 & A & F & E & 0 & A & 0
\end{array}
\right]_{16}.
\]
}

\item An LCD $[142,8,68]$ code obtained from a $[139,8,67]$ code, whose generator matrix is
{\tiny
\[
G_{\mathrm{LCD, 18}}=\left[
\begin{array}{cccccccccccccccccccccccccccccccccccc}
8 & 1 & 0 & 6 & 1 & 4 & 7 & 9 & 1 & 6 & 7 & 5 & 3 & E & 8 & 7 & 1 & 2 & 6 & D & 6 & F & 6 & 3 & 4 & B & B & 9 & 9 & 5 & 7 & F & F & E & 1 & 0 \\
7 & 1 & 2 & 6 & D & 6 & F & 6 & 3 & 4 & B & B & 9 & 9 & 5 & 7 & F & 0 & 2 & 0 & C & 2 & 8 & F & 2 & 2 & C & E & A & 7 & D & 1 & E & 1 & E & 8 \\
0 & E & 2 & 4 & D & A & D & E & C & 6 & 9 & 7 & 7 & 3 & 2 & A & F & E & 0 & 4 & 1 & 8 & 5 & 1 & E & 4 & 5 & 9 & D & 4 & F & A & 1 & E & 0 & C \\
8 & 0 & F & D & F & 3 & D & 7 & 0 & D & D & 3 & 1 & 5 & 8 & 2 & F & 1 & D & B & 2 & 5 & 2 & 1 & 3 & 9 & 6 & 8 & 8 & C & D & 5 & F & F & E & 0 \\
4 & 0 & 7 & E & F & 9 & E & B & 8 & 6 & E & 9 & 8 & A & C & 1 & 7 & 8 & E & D & 9 & 2 & 9 & 0 & 9 & C & B & 4 & 4 & 6 & 6 & B & 2 & 0 & C & 4 \\
2 & 0 & C & 2 & 8 & F & 2 & 2 & C & E & A & 7 & D & 0 & E & 2 & 4 & D & A & D & E & C & 6 & 9 & 7 & 7 & 3 & 2 & A & F & E & 0 & A & 0 & A & 0 \\
0 & 8 & 3 & 0 & A & 3 & C & 8 & B & 3 & A & 9 & F & 4 & 3 & 8 & 9 & 3 & 6 & B & 7 & B & 1 & A & 5 & D & C & C & A & B & F & 8 & 1 & 2 & C & 0 \\
0 & 4 & 1 & 8 & 5 & 1 & E & 4 & 5 & 9 & D & 4 & F & A & 1 & C & 4 & 9 & B & 5 & B & D & 8 & D & 2 & E & E & 6 & 5 & 5 & F & C & 0 & A & A & 0
\end{array}
\right]_{16}.
\]
}

\item An LCD $[201,8,98]$ code obtained from a $[199,8,98]$ code, whose generator matrix is
{\tiny
\[
G_{\mathrm{LCD, 19}}=\left[
\begin{array}{ccccccccccccccccccccccccccccccccccccccccccccccccccc}
8 & 0 & F & D & F & 3 & D & 7 & 0 & D & D & 3 & 1 & 5 & 8 & 2 & F & 1 & D & B & 2 & 5 & 2 & 1 & 3 & 9 & 6 & 8 & 8 & C & D & 5 & F & F & F & F & F & F & F & F & F & F & F & F & F & F & F & E & 0 & 3 & 0 \\
7 & F & 0 & 2 & 0 & C & 2 & 8 & F & 2 & 2 & C & E & A & 7 & D & 0 & E & 2 & 4 & D & A & D & E & C & 6 & 9 & 7 & 7 & 3 & 2 & B & F & F & F & F & F & F & F & F & F & F & F & F & F & F & F & E & 0 & 0 & 8 \\
4 & 0 & 7 & E & F & 9 & E & B & 8 & 6 & E & 9 & 8 & A & C & 1 & 7 & 8 & E & D & 9 & 2 & 9 & 0 & 9 & C & B & 4 & 4 & 6 & 6 & B & 0 & 3 & 7 & 1 & D & 3 & 8 & 6 & 7 & 5 & A & 2 & B & 9 & 2 & F & F & A & 0 \\
2 & 0 & C & 2 & 8 & F & 2 & 2 & C & E & A & 7 & D & 0 & E & 2 & 4 & D & A & D & E & C & 6 & 9 & 7 & 7 & 3 & 2 & A & F & E & 0 & 8 & 2 & C & 9 & 3 & A & 4 & 5 & 4 & F & 7 & 3 & E & 5 & B & 9 & 0 & 4 & 0 \\
1 & 0 & 6 & 1 & 4 & 7 & 9 & 1 & 6 & 7 & 5 & 3 & E & 8 & 7 & 1 & 2 & 6 & D & 6 & F & 6 & 3 & 4 & B & B & 9 & 9 & 5 & 7 & F & 0 & 4 & 1 & E & C & D & C & 6 & 1 & B & D & 2 & 8 & E & A & 5 & C & 8 & 4 & 0 \\
0 & 8 & 3 & 0 & A & 3 & C & 8 & B & 3 & A & 9 & F & 4 & 3 & 8 & 9 & 3 & 6 & B & 7 & B & 1 & A & 5 & D & C & C & A & B & F & 8 & 2 & 0 & 4 & 6 & 8 & F & 7 & 2 & D & E & 9 & 5 & 3 & 9 & B & C & 4 & 4 & 0 \\
0 & 4 & 1 & 8 & 5 & 1 & E & 4 & 5 & 9 & D & 4 & F & A & 1 & C & 4 & 9 & B & 5 & B & D & 8 & D & 2 & E & E & 6 & 5 & 5 & F & C & 1 & 0 & 1 & B & E & E & 3 & 8 & 6 & 7 & 9 & A & D & 4 & 5 & E & 2 & 4 & 0 \\
0 & 2 & 0 & C & 2 & 8 & F & 2 & 2 & C & E & A & 7 & D & 0 & E & 2 & 4 & D & A & D & E & C & 6 & 9 & 7 & 7 & 3 & 2 & A & F & E & 0 & 8 & 2 & 5 & 1 & F & 9 & D & 3 & B & 4 & C & 7 & 6 & A & E & 1 & 4 & 0
\end{array}
\right]_{16}.
\]
}
\end{itemize}
\endgroup
Among these 19 codes, the $[132,6,65]$ and $[195,6,97]$ codes are equivalent to codes in the MAGMA BKLC database, whereas the remaining codes are inequivalent to all BKLC codes. Also, the $[69,6,33]$, $[162,6,80]$ and $[225,6,112]$ codes are equivalent to the codes in~\cite{LL-2024}. For the remaining 14 codes, we could not find any equivalent codes in the previous studies. Moreover, these codes attain the upper bound, and hence determine the exact values $d_{LCD}(n, k)$ for the corresponding parameters $n$ and $k$.

Therefore, we obtain the following theorem.

\begin{theorem}\label{thm:new-optimal-lcd-codes}
There exist new binary optimal LCD codes with the following parameters:
\[
\begin{gathered}
\relax [55,7,25],\ [57,7,26],\ [69,7,32],\ [72,7,34],\ [73,7,34],\\
[85,7,40],\ [134,7,65],\ [136,7,66],\ [199,7,98],\\
[57,8,25],\ [134,8,64],\ [138,8,66],\ [142,8,68],\ [201,8,98].
\end{gathered}
\]
\end{theorem}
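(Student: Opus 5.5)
The proof is a verification of the fourteen generator matrices listed above: $G_{\mathrm{LCD},6}$ through $G_{\mathrm{LCD},11}$, $G_{\mathrm{LCD},12}$ through $G_{\mathrm{LCD},14}$, and $G_{\mathrm{LCD},15}$ through $G_{\mathrm{LCD},19}$, matched to the parameters in the statement in the order listed. For each matrix I would first decode it from hexadecimal into a binary $k\times n'$ matrix, where $n'$ is the length padded to a multiple of $4$. I would then delete the trailing zero columns to recover a $k\times n$ matrix $G$ with the stated $n$, and confirm that $G$ has rank $k$. Three properties then need to be checked for each code.

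\textbf{The code is LCD.} By Massey's criterion recalled in Section 2, it suffices to check that $GG^T$ is invertible over $\mathbb{F}_2$, which is a $k\times k$ determinant computation with $k\in\{7,8\}$. The construction already suggests why this holds: each $G$ has the block form of Theorem~\ref{shortest-lcd-embedding}, with $D$ invertible. Even so, the direct determinant check is the cleanest certificate and does not depend on the construction.

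\textbf{The minimum distance is the claimed $d$.} Since $k\le 8$, there are at most $2^8-1=255$ nonzero codewords, so I would enumerate all of them and record the least weight. This establishes $d(\mathcal{C})=d$ and shows that an LCD $[n,k,d]$ code exists.

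\textbf{The code is optimal.} By Corollary~\ref{recursive-upper-bound}, $d_{LCD}(n,k)\le U(n,k)$. Reading Table~\ref{bound_table} gives $U(n,k)$ equal to the claimed $d$ in every case, for example $U(55,7)=25$, $U(136,7)=66$, $U(142,8)=68$ and $U(201,8)=98$. Hence $d_{LCD}(n,k)=d$ and each code is optimal.

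\textbf{The codes are new.} This part rests on comparison with the literature rather than on proof. For each code I would test equivalence against the MAGMA BKLC code with the same parameters and against the codes in \cite{LL-2024} and the related tables. Failing to find an equivalent code is what justifies calling it new.

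The computations above are routine. The main obstacle is the trustworthiness of the bound $U(n,k)$. It depends on the exact dimension-$4$ and dimension-$5$ values from \cite{AHS-2021,LLFS-2024} and on Grassl's table, and the recursion must be propagated correctly across all of $51\le n\le 256$. I would therefore recompute the relevant recursion chains independently for the fourteen parameter pairs.
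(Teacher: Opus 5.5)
Your proposal is correct and takes essentially the paper's approach. The theorem rests on the displayed matrices $G_{\mathrm{LCD},6},\dots,G_{\mathrm{LCD},19}$ generating LCD codes with the stated minimum distances, on those distances equalling the bound $U(n,k)$ from Corollary~\ref{recursive-upper-bound} and Table~\ref{bound_table}, and on a comparison with the BKLC database and \cite{LL-2024} to establish novelty; your checks of $\det(GG^T)$ and the full weight enumeration simply make explicit the computations the paper leaves implicit.
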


The results for codes of dimensions $6\leq k\leq 10$ are presented in Tables~\ref{tab:lcd_embedding_k6}, \ref{tab:lcd_embedding_k7}, \ref{tab:lcd_embedding_k8}, \ref{tab:lcd_embedding_k9} and \ref{tab:lcd_embedding_k10}. For each parameter set, we performed Experiment 1 twice and Experiment 2 twice, and recorded the best outcome among the four runs.

In the tables, $n$ denotes the length of the code obtained after applying a shortest LCD embedding. The column ``Bound'' gives the upper bound on $d_{LCD}(n,k)$ for $6\leq k\leq 10$, derived in Subsection~\ref{experiments-bound-LCD}. The column ``Our result'' denotes the largest minimum distance obtained by our method for the corresponding parameters. An asterisk $(*)$ indicates that the upper bound is attained.

The generator matrices of resulting LCD embeddings are available at~\cite{github}.

\begin{table}[p]
\centering
\rotatebox{\lcdGreedyTableRotationAngle}{%
\begin{minipage}{0.92\textheight}
\centering
\caption{Highest minimum distances obtained from shortest LCD embeddings for $k=6$}
\label{tab:lcd_embedding_k6}
\scriptsize
\setlength{\tabcolsep}{2.2pt}
\renewcommand{\arraystretch}{1.02}
\resizebox{\linewidth}{!}{%
\begin{tabular}{c|r|c||c|r|c||c|r|c||c|r|c||c|r|c||c|r|c}
\Xhline{2\arrayrulewidth}
$n$ & Bound & Our result & $n$ & Bound & Our result & $n$ & Bound & Our result & $n$ & Bound & Our result & $n$ & Bound & Our result & $n$ & Bound & Our result \\
\Xhline{2\arrayrulewidth}
52 & 25 & 24 & 86 & 42 & 41 & 121 & 60 & 59 & 157 & 79 & 77 & 191 & 96 & 94 & 227 & 114 & 113\\
\hline
54 & 26 & 25 & 87 & 43 & 42 & 123 & 62 & 59 & 158 & 80 & 77 & 193 & 96 & 95 & 228 & 114 & 113\\
\hline
55 & 27 & 26 & 88 & 44 & 42 & 125 & 63 & 61 & 159 & 80 & 78 & 195 & 97 & 97* & 229 & 115 & 114\\
\hline
56 & 27 & 25 & 89 & 44 & 41 & 126 & 63 & 61 & 160 & 80 & 78 & 196 & 98 & 97 & 230 & 116 & 113\\
\hline
57 & 28 & 25 & 92 & 46 & 44 & 127 & 64 & 62 & 162 & 80 & 80* & 197 & 98 & 97 & 231 & 116 & 113\\
\hline
58 & 28 & 27 & 94 & 47 & 45 & 128 & 64 & 62 & 164 & 82 & 81 & 198 & 99 & 97 & 232 & 116 & 115\\
\hline
60 & 29 & 28 & 95 & 47 & 45 & 130 & 64 & 63 & 165 & 82 & 81 & 199 & 100 & 98 & 234 & 118 & 116\\
\hline
62 & 30 & 29 & 96 & 48 & 46 & 132 & 65 & 65* & 166 & 83 & 82 & 200 & 100 & 97 & 236 & 119 & 117\\
\hline
63 & 31 & 29 & 97 & 48 & 46 & 133 & 66 & 65 & 167 & 84 & 82 & 203 & 102 & 100 & 237 & 120 & 117\\
\hline
64 & 31 & 30 & 99 & 48 & 47 & 134 & 66 & 65 & 168 & 84 & 81 & 205 & 103 & 102 & 238 & 120 & 119\\
\hline
65 & 32 & 30 & 101 & 50 & 49 & 135 & 67 & 65 & 171 & 86 & 84 & 206 & 104 & 101 & 239 & 120 & 119\\
\hline
67 & 32 & 31 & 102 & 50 & 49 & 136 & 68 & 65 & 173 & 87 & 85 & 207 & 104 & 103 & 241 & 121 & 120\\
\hline
69 & 33 & 33* & 103 & 51 & 50 & 137 & 68 & 65 & 174 & 88 & 85 & 208 & 104 & 103 & 243 & 122 & 121\\
\hline
70 & 34 & 33 & 104 & 52 & 50 & 140 & 70 & 68 & 175 & 88 & 87 & 210 & 105 & 104 & 244 & 123 & 122\\
\hline
71 & 34 & 33 & 105 & 52 & 49 & 142 & 71 & 69 & 176 & 88 & 87 & 211 & 106 & 105 & 245 & 124 & 121\\
\hline
72 & 35 & 34 & 108 & 54 & 52 & 143 & 72 & 69 & 178 & 89 & 88 & 212 & 106 & 105 & 246 & 124 & 121\\
\hline
73 & 36 & 34 & 110 & 55 & 53 & 144 & 72 & 70 & 180 & 90 & 89 & 213 & 107 & 106 & 247 & 124 & 122\\
\hline
74 & 36 & 33 & 111 & 55 & 53 & 145 & 72 & 70 & 181 & 91 & 90 & 214 & 108 & 105 & 249 & 126 & 124\\
\hline
77 & 38 & 37 & 112 & 56 & 54 & 147 & 73 & 72 & 182 & 92 & 89 & 216 & 108 & 106 & 251 & 127 & 125\\
\hline
79 & 39 & 38 & 113 & 56 & 54 & 149 & 74 & 73 & 183 & 92 & 89 & 218 & 110 & 108 & 252 & 128 & 125\\
\hline
80 & 39 & 38 & 115 & 57 & 55 & 150 & 75 & 74 & 184 & 92 & 91 & 220 & 111 & 109 & 253 & 128 & 126\\
\hline
81 & 40 & 39 & 117 & 58 & 57 & 151 & 76 & 73 & 186 & 94 & 92 & 221 & 112 & 109 & 256 & 128 & 127\\
\hline
82 & 40 & 39 & 118 & 59 & 57 & 152 & 76 & 75 & 188 & 95 & 93 & 222 & 112 & 110 &  &  & \\
\hline
84 & 41 & 40 & 119 & 60 & 57 & 153 & 76 & 75 & 189 & 96 & 93 & 223 & 112 & 110 &  &  & \\
\hline
85 & 42 & 41 & 120 & 60 & 59 & 155 & 78 & 76 & 190 & 96 & 94 & 225 & 112 & 112* &  &  & \\
\Xhline{2\arrayrulewidth}
\end{tabular}%
}
\end{minipage}%
}
\end{table}

\begin{table}[p]
\centering
\rotatebox{\lcdGreedyTableRotationAngle}{%
\begin{minipage}{0.92\textheight}
\centering
\caption{Highest minimum distances obtained from shortest LCD embeddings for $k=7$}
\label{tab:lcd_embedding_k7}
\scriptsize
\setlength{\tabcolsep}{2.2pt}
\renewcommand{\arraystretch}{1.02}
\resizebox{\linewidth}{!}{%
\begin{tabular}{c|r|c||c|r|c||c|r|c||c|r|c||c|r|c||c|r|c}
\Xhline{2\arrayrulewidth}
$n$ & Bound & Our result & $n$ & Bound & Our result & $n$ & Bound & Our result & $n$ & Bound & Our result & $n$ & Bound & Our result & $n$ & Bound & Our result \\
\Xhline{2\arrayrulewidth}
51 & 24 & 22 & 85 & 40 & 40* & 121 & 60 & 57 & 155 & 76 & 75 & 192 & 96 & 93 & 226 & 112 & 111\\
\hline
52 & 24 & 23 & 87 & 42 & 41 & 122 & 60 & 59 & 158 & 78 & 76 & 193 & 96 & 94 & 228 & 113 & 111\\
\hline
53 & 24 & 23 & 88 & 42 & 41 & 123 & 61 & 59 & 160 & 80 & 78 & 194 & 96 & 94 & 230 & 114 & 113\\
\hline
55 & 25 & 25* & 89 & 43 & 41 & 125 & 62 & 60 & 161 & 80 & 78 & 196 & 96 & 95 & 231 & 115 & 113\\
\hline
56 & 26 & 25 & 90 & 44 & 42 & 127 & 63 & 61 & 162 & 80 & 79 & 198 & 98 & 97 & 232 & 116 & 114\\
\hline
57 & 26 & 26* & 92 & 45 & 43 & 128 & 64 & 61 & 163 & 80 & 79 & 199 & 98 & 98* & 233 & 116 & 113\\
\hline
58 & 27 & 26 & 94 & 46 & 44 & 129 & 64 & 62 & 165 & 81 & 80 & 200 & 99 & 98 & 234 & 116 & 115\\
\hline
59 & 28 & 25 & 96 & 47 & 45 & 130 & 64 & 62 & 167 & 82 & 81 & 201 & 100 & 97 & 235 & 117 & 115\\
\hline
60 & 28 & 25 & 98 & 48 & 47 & 132 & 64 & 63 & 168 & 83 & 81 & 202 & 100 & 97 & 237 & 118 & 116\\
\hline
61 & 29 & 26 & 99 & 48 & 47 & 134 & 65 & 65* & 169 & 84 & 82 & 203 & 100 & 97 & 239 & 120 & 118\\
\hline
64 & 31 & 29 & 101 & 49 & 48 & 135 & 66 & 65 & 170 & 84 & 81 & 206 & 102 & 101 & 240 & 120 & 117\\
\hline
65 & 31 & 30 & 103 & 50 & 49 & 136 & 66 & 66* & 171 & 84 & 81 & 208 & 104 & 102 & 241 & 120 & 119\\
\hline
66 & 32 & 30 & 104 & 51 & 49 & 137 & 67 & 66 & 172 & 85 & 83 & 209 & 104 & 102 & 242 & 120 & 119\\
\hline
67 & 32 & 31 & 105 & 52 & 50 & 138 & 68 & 65 & 174 & 86 & 84 & 210 & 104 & 103 & 244 & 122 & 119\\
\hline
69 & 32 & 32* & 106 & 52 & 49 & 139 & 68 & 65 & 176 & 88 & 86 & 211 & 104 & 103 & 246 & 123 & 121\\
\hline
71 & 34 & 33 & 107 & 52 & 49 & 140 & 68 & 65 & 177 & 88 & 86 & 213 & 106 & 104 & 247 & 124 & 121\\
\hline
72 & 34 & 34* & 108 & 53 & 52 & 142 & 70 & 68 & 178 & 88 & 87 & 215 & 107 & 105 & 248 & 124 & 121\\
\hline
73 & 34 & 34* & 110 & 54 & 52 & 143 & 70 & 69 & 179 & 88 & 86 & 216 & 108 & 105 & 249 & 124 & 123\\
\hline
74 & 35 & 34 & 112 & 55 & 53 & 144 & 71 & 70 & 181 & 90 & 88 & 217 & 108 & 105 & 250 & 125 & 123\\
\hline
75 & 36 & 33 & 113 & 56 & 53 & 145 & 72 & 70 & 183 & 91 & 90 & 218 & 108 & 107 & 252 & 126 & 123\\
\hline
76 & 36 & 34 & 114 & 56 & 55 & 147 & 72 & 70 & 184 & 92 & 90 & 219 & 109 & 107 & 254 & 128 & 125\\
\hline
78 & 37 & 36 & 115 & 56 & 55 & 150 & 74 & 72 & 185 & 92 & 89 & 221 & 110 & 108 & 255 & 128 & 125\\
\hline
80 & 38 & 37 & 117 & 58 & 56 & 152 & 75 & 73 & 186 & 92 & 91 & 223 & 112 & 109 & 256 & 128 & 126\\
\hline
82 & 40 & 38 & 119 & 59 & 57 & 153 & 76 & 74 & 189 & 94 & 92 & 224 & 112 & 109 &  &  & \\
\hline
83 & 40 & 38 & 120 & 60 & 58 & 154 & 76 & 74 & 191 & 96 & 93 & 225 & 112 & 111 &  &  & \\
\Xhline{2\arrayrulewidth}
\end{tabular}%
}
\end{minipage}%
}
\end{table}

\begin{table}[p]
\centering
\rotatebox{\lcdGreedyTableRotationAngle}{%
\begin{minipage}{0.92\textheight}
\centering
\caption{Highest minimum distances obtained from shortest LCD embeddings for $k=8$}
\label{tab:lcd_embedding_k8}
\scriptsize
\setlength{\tabcolsep}{2.2pt}
\renewcommand{\arraystretch}{1.02}
\resizebox{\linewidth}{!}{%
\begin{tabular}{c|r|c||c|r|c||c|r|c||c|r|c||c|r|c||c|r|c}
\Xhline{2\arrayrulewidth}
$n$ & Bound & Our result & $n$ & Bound & Our result & $n$ & Bound & Our result & $n$ & Bound & Our result & $n$ & Bound & Our result & $n$ & Bound & Our result \\
\Xhline{2\arrayrulewidth}
51 & 23 & 21 & 85 & 40 & 38 & 119 & 57 & 55 & 157 & 76 & 74 & 189 & 94 & 91 & 226 & 112 & 109\\
\hline
52 & 24 & 22 & 86 & 40 & 38 & 121 & 58 & 57 & 158 & 77 & 74 & 191 & 95 & 92 & 227 & 112 & 111\\
\hline
53 & 24 & 22 & 87 & 40 & 39 & 122 & 58 & 57 & 159 & 78 & 76 & 193 & 96 & 93 & 228 & 112 & 111\\
\hline
54 & 24 & 23 & 88 & 41 & 39 & 123 & 59 & 57 & 161 & 79 & 77 & 195 & 96 & 95 & 230 & 114 & 112\\
\hline
55 & 24 & 23 & 90 & 42 & 41 & 124 & 60 & 58 & 163 & 80 & 78 & 196 & 96 & 95 & 232 & 115 & 113\\
\hline
57 & 25 & 25* & 92 & 44 & 41 & 125 & 61 & 58 & 164 & 80 & 78 & 198 & 97 & 96 & 233 & 116 & 113\\
\hline
58 & 26 & 25 & 93 & 44 & 42 & 126 & 62 & 57 & 165 & 80 & 79 & 199 & 98 & 97 & 234 & 116 & 114\\
\hline
59 & 26 & 25 & 94 & 44 & 42 & 128 & 63 & 61 & 166 & 80 & 79 & 200 & 98 & 97 & 235 & 116 & 113\\
\hline
60 & 27 & 26 & 95 & 45 & 43 & 130 & 64 & 62 & 168 & 82 & 80 & 201 & 98 & 98* & 236 & 117 & 115\\
\hline
61 & 28 & 25 & 96 & 46 & 44 & 132 & 64 & 63 & 170 & 83 & 81 & 202 & 99 & 98 & 237 & 118 & 115\\
\hline
62 & 28 & 26 & 98 & 47 & 45 & 134 & 64 & 64* & 171 & 84 & 81 & 203 & 100 & 97 & 239 & 119 & 117\\
\hline
63 & 28 & 26 & 100 & 48 & 46 & 136 & 66 & 65 & 172 & 84 & 82 & 204 & 100 & 97 & 241 & 120 & 118\\
\hline
64 & 29 & 26 & 101 & 48 & 46 & 137 & 66 & 65 & 173 & 84 & 82 & 205 & 100 & 97 & 242 & 120 & 117\\
\hline
65 & 30 & 27 & 102 & 48 & 47 & 138 & 66 & 66* & 174 & 85 & 83 & 207 & 102 & 100 & 243 & 120 & 119\\
\hline
67 & 31 & 29 & 103 & 48 & 47 & 139 & 67 & 66 & 175 & 86 & 84 & 209 & 103 & 101 & 244 & 121 & 119\\
\hline
69 & 32 & 30 & 105 & 50 & 48 & 140 & 68 & 67 & 177 & 87 & 85 & 211 & 104 & 102 & 246 & 122 & 120\\
\hline
70 & 32 & 30 & 107 & 51 & 49 & 141 & 68 & 65 & 179 & 88 & 86 & 212 & 104 & 102 & 247 & 123 & 120\\
\hline
71 & 32 & 31 & 108 & 52 & 49 & 142 & 68 & 68* & 180 & 88 & 86 & 213 & 105 & 103 & 248 & 124 & 121\\
\hline
72 & 32 & 31 & 109 & 52 & 50 & 143 & 69 & 68 & 181 & 88 & 87 & 215 & 106 & 104 & 249 & 124 & 122\\
\hline
74 & 34 & 33 & 110 & 52 & 50 & 144 & 70 & 69 & 182 & 89 & 87 & 216 & 107 & 105 & 250 & 124 & 121\\
\hline
76 & 35 & 34 & 111 & 53 & 49 & 145 & 70 & 68 & 183 & 90 & 89 & 217 & 108 & 105 & 251 & 125 & 123\\
\hline
77 & 36 & 33 & 112 & 54 & 52 & 146 & 71 & 70 & 184 & 90 & 88 & 218 & 108 & 106 & 252 & 126 & 123\\
\hline
78 & 36 & 34 & 114 & 55 & 53 & 151 & 73 & 72 & 185 & 91 & 90 & 220 & 109 & 107 & 254 & 127 & 124\\
\hline
79 & 36 & 35 & 116 & 56 & 54 & 153 & 74 & 73 & 186 & 92 & 89 & 221 & 110 & 107 & 256 & 128 & 125\\
\hline
80 & 37 & 33 & 117 & 56 & 54 & 155 & 76 & 74 & 187 & 92 & 89 & 223 & 111 & 108 &  &  & \\
\hline
83 & 39 & 37 & 118 & 56 & 55 & 156 & 76 & 74 & 188 & 93 & 89 & 225 & 112 & 110 &  &  & \\
\Xhline{2\arrayrulewidth}
\end{tabular}%
}
\end{minipage}%
}
\end{table}

\begin{table}[p]
\centering
\rotatebox{\lcdGreedyTableRotationAngle}{%
\begin{minipage}{0.92\textheight}
\centering
\caption{Highest minimum distances obtained from shortest LCD embeddings for $k=9$}
\label{tab:lcd_embedding_k9}
\scriptsize
\setlength{\tabcolsep}{2.2pt}
\renewcommand{\arraystretch}{1.02}
\resizebox{\linewidth}{!}{%
\begin{tabular}{c|r|c||c|r|c||c|r|c||c|r|c||c|r|c||c|r|c}
\Xhline{2\arrayrulewidth}
$n$ & Bound & Our result & $n$ & Bound & Our result & $n$ & Bound & Our result & $n$ & Bound & Our result & $n$ & Bound & Our result & $n$ & Bound & Our result \\
\Xhline{2\arrayrulewidth}
53 & 23 & 21 & 89 & 41 & 38 & 123 & 58 & 55 & 158 & 76 & 74 & 190 & 92 & 89 & 225 & 110 & 106\\
\hline
55 & 24 & 22 & 90 & 42 & 39 & 125 & 60 & 57 & 159 & 76 & 74 & 191 & 92 & 90 & 226 & 110 & 107\\
\hline
56 & 24 & 22 & 91 & 42 & 39 & 127 & 60 & 58 & 160 & 77 & 74 & 192 & 93 & 89 & 228 & 112 & 109\\
\hline
57 & 24 & 22 & 93 & 44 & 41 & 128 & 61 & 58 & 161 & 78 & 74 & 193 & 94 & 91 & 230 & 112 & 110\\
\hline
58 & 25 & 23 & 95 & 44 & 42 & 129 & 62 & 58 & 162 & 78 & 74 & 194 & 94 & 91 & 231 & 112 & 110\\
\hline
59 & 26 & 23 & 96 & 44 & 42 & 130 & 62 & 59 & 163 & 79 & 75 & 196 & 96 & 93 & 232 & 113 & 111\\
\hline
61 & 27 & 25 & 97 & 45 & 42 & 131 & 62 & 60 & 164 & 80 & 76 & 198 & 96 & 94 & 233 & 114 & 111\\
\hline
63 & 28 & 26 & 98 & 46 & 42 & 133 & 64 & 61 & 166 & 80 & 77 & 199 & 96 & 93 & 235 & 115 & 112\\
\hline
64 & 28 & 26 & 99 & 46 & 43 & 135 & 64 & 62 & 167 & 80 & 78 & 200 & 96 & 95 & 237 & 116 & 113\\
\hline
65 & 29 & 26 & 100 & 47 & 43 & 136 & 64 & 62 & 168 & 80 & 78 & 203 & 98 & 96 & 238 & 116 & 113\\
\hline
66 & 30 & 27 & 102 & 48 & 45 & 137 & 64 & 63 & 169 & 81 & 79 & 205 & 100 & 97 & 239 & 116 & 113\\
\hline
67 & 30 & 28 & 104 & 48 & 46 & 138 & 65 & 64 & 172 & 83 & 81 & 206 & 100 & 98 & 240 & 117 & 114\\
\hline
69 & 31 & 29 & 105 & 48 & 46 & 140 & 66 & 65 & 174 & 84 & 81 & 207 & 100 & 97 & 241 & 118 & 113\\
\hline
71 & 32 & 30 & 106 & 49 & 47 & 141 & 67 & 66 & 175 & 84 & 81 & 208 & 101 & 97 & 242 & 118 & 115\\
\hline
72 & 32 & 30 & 107 & 50 & 47 & 142 & 68 & 65 & 176 & 85 & 82 & 209 & 102 & 97 & 243 & 119 & 118\\
\hline
73 & 32 & 30 & 109 & 51 & 49 & 143 & 68 & 65 & 177 & 86 & 83 & 210 & 102 & 97 & 245 & 120 & 117\\
\hline
74 & 33 & 31 & 111 & 52 & 50 & 144 & 68 & 66 & 178 & 86 & 82 & 212 & 104 & 100 & 247 & 121 & 118\\
\hline
75 & 34 & 31 & 112 & 52 & 50 & 145 & 69 & 66 & 179 & 87 & 84 & 214 & 104 & 101 & 248 & 122 & 118\\
\hline
77 & 35 & 33 & 113 & 53 & 50 & 146 & 70 & 66 & 180 & 88 & 84 & 216 & 105 & 102 & 250 & 122 & 119\\
\hline
79 & 36 & 34 & 114 & 54 & 51 & 147 & 70 & 66 & 182 & 88 & 85 & 217 & 106 & 102 & 252 & 124 & 121\\
\hline
80 & 36 & 34 & 115 & 54 & 51 & 150 & 72 & 69 & 183 & 88 & 86 & 218 & 106 & 102 & 254 & 126 & 122\\
\hline
81 & 37 & 33 & 117 & 56 & 53 & 152 & 72 & 70 & 184 & 89 & 86 & 219 & 107 & 103 & 255 & 127 & 122\\
\hline
82 & 38 & 33 & 119 & 56 & 54 & 153 & 73 & 70 & 185 & 90 & 87 & 221 & 108 & 105 & 256 & 128 & 122\\
\hline
84 & 39 & 36 & 120 & 56 & 54 & 154 & 74 & 71 & 186 & 90 & 88 & 222 & 108 & 106 &  &  & \\
\hline
86 & 40 & 37 & 121 & 57 & 54 & 155 & 74 & 72 & 188 & 91 & 88 & 223 & 109 & 106 &  &  & \\
\hline
88 & 40 & 38 & 122 & 58 & 55 & 157 & 76 & 73 & 189 & 92 & 89 & 224 & 110 & 106 &  &  & \\
\Xhline{2\arrayrulewidth}
\end{tabular}%
}
\end{minipage}%
}
\end{table}

\begin{table}[p]
\centering
\rotatebox{\lcdGreedyTableRotationAngle}{%
\begin{minipage}{0.92\textheight}
\centering
\caption{Highest minimum distances obtained from shortest LCD embeddings for $k=10$}
\label{tab:lcd_embedding_k10}
\scriptsize
\setlength{\tabcolsep}{2.2pt}
\renewcommand{\arraystretch}{1.02}
\resizebox{\linewidth}{!}{%
\begin{tabular}{c|r|c||c|r|c||c|r|c||c|r|c||c|r|c||c|r|c}
\Xhline{2\arrayrulewidth}
$n$ & Bound & Our result & $n$ & Bound & Our result & $n$ & Bound & Our result & $n$ & Bound & Our result & $n$ & Bound & Our result & $n$ & Bound & Our result \\
\Xhline{2\arrayrulewidth}
51 & 21 & 19 & 85 & 38 & 34 & 117 & 55 & 51 & 150 & 72 & 66 & 186 & 89 & 86 & 222 & 108 & 103\\
\hline
53 & 22 & 18 & 86 & 39 & 34 & 118 & 56 & 52 & 151 & 72 & 68 & 187 & 90 & 86 & 224 & 108 & 105\\
\hline
54 & 23 & 20 & 87 & 40 & 36 & 120 & 56 & 53 & 153 & 72 & 69 & 188 & 90 & 86 & 226 & 110 & 106\\
\hline
56 & 24 & 21 & 88 & 40 & 37 & 122 & 57 & 54 & 155 & 74 & 70 & 189 & 91 & 87 & 227 & 110 & 106\\
\hline
58 & 24 & 23 & 89 & 40 & 37 & 123 & 58 & 54 & 156 & 74 & 70 & 191 & 92 & 89 & 228 & 111 & 107\\
\hline
59 & 25 & 23 & 90 & 40 & 38 & 124 & 58 & 55 & 157 & 75 & 71 & 192 & 92 & 89 & 229 & 112 & 107\\
\hline
60 & 26 & 22 & 91 & 41 & 38 & 125 & 58 & 55 & 158 & 76 & 71 & 194 & 93 & 90 & 231 & 112 & 109\\
\hline
61 & 26 & 22 & 92 & 42 & 39 & 127 & 60 & 57 & 160 & 76 & 73 & 196 & 94 & 91 & 233 & 112 & 110\\
\hline
62 & 27 & 24 & 93 & 42 & 40 & 129 & 60 & 57 & 162 & 78 & 74 & 199 & 96 & 93 & 234 & 113 & 109\\
\hline
64 & 28 & 25 & 95 & 44 & 41 & 130 & 61 & 58 & 163 & 78 & 74 & 201 & 96 & 94 & 235 & 114 & 111\\
\hline
66 & 28 & 26 & 97 & 44 & 41 & 131 & 62 & 59 & 164 & 78 & 74 & 202 & 96 & 94 & 236 & 114 & 111\\
\hline
67 & 29 & 27 & 98 & 44 & 42 & 132 & 62 & 58 & 165 & 79 & 74 & 203 & 97 & 95 & 238 & 116 & 112\\
\hline
68 & 30 & 26 & 99 & 45 & 42 & 133 & 63 & 58 & 166 & 80 & 75 & 204 & 98 & 95 & 240 & 116 & 113\\
\hline
69 & 30 & 26 & 100 & 46 & 42 & 134 & 64 & 60 & 168 & 80 & 77 & 206 & 99 & 96 & 241 & 117 & 113\\
\hline
70 & 31 & 27 & 101 & 46 & 43 & 136 & 64 & 61 & 169 & 80 & 77 & 208 & 100 & 97 & 242 & 118 & 114\\
\hline
72 & 32 & 29 & 102 & 47 & 44 & 138 & 64 & 62 & 171 & 82 & 78 & 209 & 100 & 97 & 243 & 118 & 114\\
\hline
74 & 32 & 30 & 104 & 48 & 45 & 139 & 65 & 62 & 173 & 83 & 79 & 210 & 101 & 97 & 244 & 119 & 113\\
\hline
75 & 33 & 30 & 106 & 48 & 46 & 140 & 66 & 62 & 174 & 84 & 79 & 211 & 102 & 97 & 245 & 120 & 113\\
\hline
76 & 34 & 31 & 107 & 49 & 46 & 141 & 66 & 63 & 176 & 84 & 81 & 212 & 102 & 98 & 246 & 120 & 116\\
\hline
77 & 34 & 32 & 108 & 50 & 47 & 142 & 67 & 64 & 178 & 85 & 82 & 213 & 103 & 97 & 248 & 120 & 116\\
\hline
79 & 36 & 33 & 109 & 50 & 48 & 144 & 68 & 65 & 179 & 86 & 81 & 214 & 104 & 99 & 249 & 121 & 117\\
\hline
80 & 36 & 33 & 111 & 52 & 49 & 145 & 68 & 66 & 180 & 86 & 82 & 215 & 104 & 99 & 252 & 123 & 118\\
\hline
81 & 36 & 34 & 113 & 52 & 50 & 146 & 69 & 66 & 181 & 87 & 82 & 217 & 104 & 101 & 254 & 124 & 119\\
\hline
82 & 36 & 34 & 114 & 53 & 50 & 147 & 70 & 66 & 182 & 88 & 84 & 219 & 106 & 102 & 255 & 124 & 120\\
\hline
83 & 37 & 34 & 115 & 54 & 50 & 148 & 70 & 66 & 183 & 88 & 83 & 220 & 106 & 102 &  &  & \\
\hline
84 & 38 & 34 & 116 & 54 & 50 & 149 & 71 & 66 & 185 & 88 & 85 & 221 & 107 & 103 &  &  & \\
\Xhline{2\arrayrulewidth}
\end{tabular}%
}
\end{minipage}%
}
\end{table}

% The title of your section 5:
\section{Conclusion}
In this paper, we have constructed optimal LCD codes by applying a greedy algorithm to shortest LCD embedding methods. Since an invertible matrix together with an arbitrary matrix yields an LCD embedding of a given code, we have constructed our Greedy algorithm by using elementary row operations for an invertible matrix and entry-flip operations for an arbitrary matrix as local moves. We have obtained $19$ optimal LCD codes for lengths $51\le n\le 256$ and dimensions $6\le k\le 8$, 14 of which are new.

or dimensions $k=9$ and $k=10$, however, we did not obtain any codes attaining the corresponding bounds. Constructing optimal LCD codes in these dimensions remains a topic for future work.

\section*{Acknowledgement}
This research (J.-L. Kim) was supported in part by the BK21 FOUR (Fostering Outstanding Universities for Research) funded by the Ministry of Education (MOE, Korea), National Research Foundation of Korea (NRF) under Grant No. 4120240415042, Basic Science Research Program through the National Research Foundation of Korea (NRF) funded by the Ministry of Science and ICT under Grant No. RS-2025-24534992 and Global - Learning \& Academic research institution for Master’s·PhD students, and Postdocs(LAMP) Program of the National Research Foundation of Korea(NRF) grant funded by the Ministry of Education(No. RS-2024-00441954).

\end{document}